\newtheorem{proof}{Proof}
\newtheorem{remark}{Remark}
\newtheorem{lemma}{Lemma}
\newtheorem{assumption}{Assumption}
\newtheorem{theorem}{Theorem }
\definecolor{shadecolor}{rgb}{1,0.8,0.3}
\begin{document}

\runninghead{Erkan~Kayacan~and~Thor~I.~Fossen: Feedback Linearization Control for Systems with Mismatched Uncertainties via Disturbance Observers}


\title{FEEDBACK LINEARIZATION CONTROL FOR SYSTEMS WITH MISMATCHED UNCERTAINTIES VIA DISTURBANCE OBSERVERS}

\author{Erkan~Kayacan~and~Thor~I.~Fossen}

\address{E. Kayacan is with Coordinated Science Lab, University of Illinois at Urbana-Champaign, Urbana, Illinois 61801, USA. e-mail: {\tt\small erkank@illinois.edu } \\
T. I. Fossen is with Department of Engineering Cybernetics, Center for Autonomous Marine Operations and Systems, Norwegian University of Science and Technology, Trondheim, Norway. e-mail: {\tt\small thor.fossen@ntnu.no }}


\begin{abstract}
This paper focuses on a novel feedback linearization control (FLC) law based on a self-learning disturbance observer (SLDO) to counteract mismatched uncertainties. The FLC based on BNDO (FLC-BNDO) demonstrates robust control performance only against mismatched time-invariant uncertainties while the FLC based on SLDO (FLC-SLDO) demonstrates robust control performance against mismatched time-invariant and -varying uncertainties, and both of them maintain the nominal control performance in the absence of mismatched uncertainties. In the estimation scheme for the SLDO, the BNDO is used to provide a conventional estimation law, which is used as being the learning error for the type-2 neuro-fuzzy system (T2NFS), and T2NFS learns mismatched uncertainties. Thus, the T2NFS takes the overall control of the estimation signal entirely in a very short time and gives unbiased estimation results for the disturbance. A novel learning algorithm established on sliding mode control theory is derived for an interval type-2 fuzzy logic system. The stability of the overall system is proven for a second-order nonlinear system with mismatched uncertainties. The simulation results show that the FLC-SLDO demonstrates better control performance than the traditional FLC, FLC with an integral action (FLC-I) and FLC-BNDO. 
\end{abstract}

\keywords{Disturbance observer, feedback linearization control, mismatched uncertainty, uncertain systems.}

\maketitle


%
%

\section{INTRODUCTION}

Uncertainties, e.g., unmodelled dynamics, parameter variations, and disturbances, are inevitable in real-time applications \cite{7302059}. Since control performance is violated by these uncertainties, disturbance attenuation has a major influence on control system design \cite{ASJC1649,KAYACAN20141,ASJC1489,7934317}. Therefore, control algorithms based on disturbance observers have been developed to counteract uncertainties in systems \cite{ASJC1347, Gao2017,KAYACAN2017}. In this approach, the main idea is to merge all uncertainties into one term and to add the estimated value of the disturbance by a disturbance observer into a control law that is derived for a disturbance-free model. This control structure was successfully applied in many applications, such as reusable launch vehicles \cite{hall2006sliding}, direct drive motors \cite{Komada1991}, robot manipulators \cite{liu2000disturbance, Yonghwan1999}, ball mill grinding circuits \cite{Chen20091205}, pulse width modulated inverter \cite{Yokoyama1994}, inverted pendulum \cite{Chen2010}, nonuniform wind turbine tower \cite{He2015}, uninterruptible power supplies \cite{Mattavelli2005} and dynamic positioning system of ships  \cite{Du2016}. Moreover, many controllers, such as adaptive fuzzy-neural control \cite{Leu1999} and sliding mode control (SMC) \cite{Chen2010, Yang2013}, have been designed based on disturbance observers. 

Uncertainties can be broadly categorized into two groups: matched and mismatched uncertainties \cite{KAYACAN201578}. Disturbance and control input appear in the same channel in the former case while they appear in different channels in the later case \cite{ginoya2015state, Yang2017}. In control of systems with matched uncertainties, there exist many successful applications. The reason is that since the control input is in the same channel with the disturbance, only estimated disturbance value is added into the traditional control law. However, traditional control methods based on a disturbance observer fail to drive states of systems with mismatched uncertainties to the desired equilibrium point so that there is a necessity to derive novel control laws \cite{Ginoya2014}. The first solution is to add an integral action to control laws to remove steady-state error \cite{Chen2016}. This results in robust control performance against mismatched time-invariant uncertainties but may violate the nominal performance in the absence of the mismatched. The other solution is to derive novel control laws. For example, a novel SMC method has been proposed in \cite{Yang2013} by proposing a new sliding surface for a second-order nonlinear systems with mismatched uncertainties. The simulation results show that the proposed control method provides robust control performance against mismatched uncertainties and also maintains the nominal performance in the absence of mismatched uncertainties.

Another novel SMC method for nth order nonlinear systems with mismatched uncertainties in \cite{Ginoya2014}, both the memoryless and memory-based integral sliding surfaces and integral sliding-mode controllers for continuous-time linear systems with mismatched uncertainties in \cite{7498627}, a back-stepping control method for nonlinear systems with mismatched uncertainties in \cite{SUN20141027}, and $H_{\infty}$ controller based on a nonlinear DO in \cite{Wei2010} have been developed recently. Moreover, a comprehensive study of disturbance observer-based control methods has been given in \cite{7265050}.

In disturbance observer based control approach, uncertainties are represented as a parameter in system models, and thus online parameter estimation methods are required. There are nonlinear state and parameter estimation methods, such as particle filtering method and extended Kalman filter \cite{erkancenmpc,erkan2016acc,erkandenmpc,erkandinmpc}. However, it has been reported that the former one becomes practically unsolvable in real-time for a large number of variables while the latter one is not a good candidate in case of having strong non-linearities \cite{Haseltine2005, Rawlings2006}. Neuro-fuzzy structures with SMC theory-based learning algorithms have been employed for control and identification purposes in feedback-error learning (FEL) scheme \cite{Efe2000,erkan2013t1nfc,erdal2015t2fnn,6250790,6871316}. In this scheme, a conventional controller works in parallel with a neuro-fuzzy controller in which the output of the conventional controller is used as the learning error to train the neuro-fuzzy controller \cite{Gomi1993}. After a very short time, the neuro-fuzzy controller takes the overall control signal while the conventional controller converges to zero. Moreover, SMC theory-based learning algorithms ensure robustness and faster convergence rate than the traditional learning methods due to the computational simplicity. Motivated by these facts, an FEL estimation scheme is proposed in this study due to the computational simplicity and learning capability.

The main contributions of this paper beyond the existing state-of-the-art are as follows. First, a novel feedback linearization control (FLC) law based on a self-learning DO (SLDO) is developed against mismatched invariant and varying uncertainties, and the stability of the overall system is proven by taking the SLDO dynamics into account. Second, the SLDO is designed in FEL scheme in which the conventional estimation law and type-2 neuro-fuzzy structure (T2NFS) work in parallel while the former one is used as the learning error for the T2NFS. Thus, the FEL algorithm is employed to develop an observer for the first time. Third, novel SMC theory-based learning rules are proposed for the tuning of the T2NFS. In addition to the stability of the training algorithm, the stability of the SLDO is proven with Lyapunov stability analysis. To the best knowledge of the authors, this is also the first-time such a stability analysis is ever proven. Fourth, type-2 fuzzy logic systems are examined under noisy conditions, and computationally efficient disturbance observers are developed. 


The paper consists of five sections: The problem formulation is given in Section \ref{sec_prob_form}. The novel FLC law based on the SLDO is formulated, and the stability of the overall system is proven in Section \ref{sec_flcsldo}. The simulation results are presented in Section \ref{sec_simulation}. Finally, a brief conclusion is given in Section \ref{sec_conc}.


\section{Problem Formulation}\label{sec_prob_form}

A second-order nonlinear system with mismatched uncertainties is written in the following form:
\begin{eqnarray}\label{eq_nonlinearsystem}
\dot{x}_{1} &=& x_{2} + d \nonumber \\
\dot{x}_{2} &=& a(\textbf{x}) + b(\textbf{x}) u \nonumber \\
y &=& x_{1} \end{eqnarray}
where $x_{1}$ and $x_{2}$ denote the states, $u$ denotes the control input, $d$ denotes the disturbance, $y$ denotes the output, $a(\textbf{x})$ and $b(\textbf{x})$ denote the nonlinear system dynamics.

\subsection{Traditional FLC}

The traditional FLC law is written as follows:
\begin{equation} \label{eq_tflc_controllaw}
u =- b^{-1}(\textbf{x}) \Big( a(\textbf{x}) + k_{1} x_{1} + k_{2} x_{2} \Big)
\end{equation}
where $k_{1}, k_{2}$ denote the coefficients of the controller and they are positive, i.e., $k_{1}, k_{2}>0$.

If the control law in \eqref{eq_tflc_controllaw} is applied to the system with mismatched uncertainties in \eqref{eq_nonlinearsystem}, the closed-loop dynamics are obtained as follows:
\begin{equation}\label{eq_tflc_error}
\ddot{x}_{1} + k_{2} \dot{x}_{1} + k_{1} x_{1} = k_{2} d + \dot{d}
\end{equation}

\begin{remark}\label{remark_tflc} Equation \eqref{eq_tflc_error} shows that the state cannot be driven the desired equilibrium point under the control law proposed in \eqref{eq_tflc_controllaw}. This implies why the FLC is sensitive to mismatched uncertainties.
\end{remark}

\subsection{FLC with Integral Action}
An integral action is added to the control law in order to make the system robust against mismatched uncertainties. The FLC with an integral action (FLC-I) is formulated as follows:
\begin{equation}\label{eq_flci_controllaw}
u =- b^{-1}(\textbf{x}) \Big( a(\textbf{x}) + k_{1} x_{1} + k_{2} x_{2} + k_{i} \int^t_0 x_{1} \,dt\Big)  
\end{equation}
where $k_{i}$ denotes the coefficient for the integral action and it is positive, i.e., $k_{i}>0$. 

If the control law in \eqref{eq_flci_controllaw} is applied to the system with mismatched uncertainties in \eqref{eq_nonlinearsystem}, the closed-loop dynamics are obtained as follows:
\begin{equation}\label{eq_flci_error}
\dddot{x}_{1} + k_{2} \ddot{x}_{1} + k_{1} \dot{x}_{1} + k_{i} \dot{x}_{1} = k_{2} \dot{d} + \ddot{d}
\end{equation}
This implies that if the disturbance has a steady-state value, i.e., $\dot{d}=\ddot{d}=0$, the state can be driven to the desired equilibrium point.

\begin{remark}\label{remark_flci} Equation \eqref{eq_flci_error} shows that if disturbance has a steady-state value, i.e., $\dot{d}=\ddot{d}=0$, the FLC-I is robust to mismatched time-invariant uncertainties and thus the steady-state error is eliminated. However, adding an integral action can result in a worse performance than the nominal performance obtained by the FLC in the absence of mismatched uncertainties. 
\end{remark}


\subsection{FLC for Mismatched Time-Invariant Uncertainties}\label{sec_flcdo}

\subsubsection{Basic Nonlinear Disturbance Observer}

The second-order nonlinear system with mismatched uncertainties in \eqref{eq_nonlinearsystem} can be re-written in the following form:
\begin{eqnarray}\label{eq_nonlinearsystem_do}
\dot{\textbf{x}} &=& \textbf{g}_{1} (\textbf{x}) + \textbf{g}_{2} (\textbf{x}) u + \textbf{z} d
 \nonumber \\
y &=& x_{1} \end{eqnarray}
where $\textbf{x}=[x_{1}, x_{2}]^{T}$ denotes the state vector, $u$ denotes the control input, $d$ denotes the disturbance, $y$ denotes the output, $\textbf{g}_{1}(\textbf{x})=[x_{2}, a(\textbf{x})]^{T}$ and $\textbf{g}_{2}(\textbf{x}) =[0,b(\textbf{x})]^{T}$ denote the nonlinear system dynamics, and $\textbf{z}=[1,0]^{T}$ denotes the disturbance coefficient vector.

The BNDO has been proposed in \cite{chen2003nonlinear,Chen2004,Yang2011} for uncertainty problem to estimate the disturbance as follows:
\begin{eqnarray}\label{eq_disturbanceobserver}
\dot{p} &=& - \textbf{l} \textbf{z} p - \textbf{l} \Big( \textbf{z} \textbf{l} \textbf{x} + \textbf{g}_{1}(\textbf{x}) + \textbf{g}_{2}(\textbf{x}) u \Big) \nonumber \\
\hat{d}_{BN} &=& p + \textbf{l}\textbf{x}
\end{eqnarray}
where $\hat{d}_{BN}$ denotes the disturbance estimation, $p$ and $\textbf{l}=[\begin{array}{cc} l_{1}, & l_{2} \end{array}]$ denote the internal state and observer gain vector of the BNDO, respectively.

The time-derivative of the disturbance estimation is derived from 
\eqref{eq_disturbanceobserver} as follows:
\begin{equation}\label{eq_do_estimationsignal}
\dot{\hat{d}}_{BN} = \textbf{l} \textbf{z } e_{d}
\end{equation}
where $e_{d}= d - \hat{d}_{BN}$ denotes the estimation error for the disturbance. Since $\textbf{z}=[\begin{array}{cc} 1 & 0 \end{array}]^{T}$ in \eqref{eq_nonlinearsystem}, it is obtained as follows:
 \begin{equation}\label{eq_do_estimationsignal2}
\dot{\hat{d}}_{BN} = l_{1} e_{d}
\end{equation}

The error dynamics of the BNDO are obtained by adding the disturbance rate $\dot{d}$ into \eqref{eq_do_estimationsignal} as:
\begin{equation}\label{eq_do_error}
\dot{e}_{d} + l_{1}   e_{d} = \dot{d}
\end{equation}

\begin{assumption}\label{assumption_do_1}
The time-derivative of the disturbance is bounded and has a steady-state value, i.e., $\displaystyle \lim_{ t \to \infty} \dot{d}(t) = 0$.
\end{assumption}

If Assumption \ref{assumption_do_1} is satisfied, then \eqref{eq_do_error} is obtained as follows:
\begin{eqnarray} \label{eq_do_error_2}
\dot{e}_{d}  + l_{1}  e_{d}  = 0
\end{eqnarray}

\begin{lemma}\label{lemma_do_1}
\cite{chen2003nonlinear}: If $l_{1}$ is positive, i.e., $l_{1}>0$, the error dynamics of the BNDO in \eqref{eq_do_error_2} converge to zero exponentially. 
\end{lemma}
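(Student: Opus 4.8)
The plan is to treat \eqref{eq_do_error_2} directly as a homogeneous, scalar, linear, time-invariant ordinary differential equation and exhibit its closed-form solution. First I would note that \eqref{eq_do_error_2} is obtained from \eqref{eq_do_error} precisely under Assumption \ref{assumption_do_1}, so the argument is conditional on $\dot{d}(t)\to 0$; more carefully one works with \eqref{eq_do_error} and observes that its solution is the superposition of a decaying homogeneous part and a convolution term driven by $\dot{d}$ that vanishes asymptotically, but for the stated lemma it suffices to analyze the homogeneous equation $\dot{e}_{d} + l_{1} e_{d} = 0$.

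Second, I would integrate this equation explicitly: separating variables, or equivalently multiplying by the integrating factor $e^{l_{1}t}$, gives $e_{d}(t) = e_{d}(0)\,e^{-l_{1}t}$. Since $l_{1}>0$ the factor $e^{-l_{1}t}$ is strictly decreasing and tends to zero, which is exactly exponential convergence with rate $l_{1}$.

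Alternatively, and in the form that will mesh better with the Lyapunov-based stability analysis of the SLDO later in the paper, I would introduce the candidate function $V = \tfrac{1}{2} e_{d}^{2}$. Differentiating along \eqref{eq_do_error_2} yields $\dot{V} = e_{d}\dot{e}_{d} = -l_{1} e_{d}^{2} = -2 l_{1} V \le 0$, whence $V(t) = V(0)e^{-2 l_{1} t}$ and therefore $|e_{d}(t)| = |e_{d}(0)|\,e^{-l_{1}t}$. Either route delivers the claim, and I would likely present the Lyapunov version for consistency with the remainder of the manuscript.

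There is essentially no hard step here: the statement is a one-line fact about a first-order linear ODE. The only point that warrants care is the bookkeeping around Assumption \ref{assumption_do_1}, namely making explicit that exponential convergence of the true error dynamics \eqref{eq_do_error} relies on the disturbance-rate term decaying, and that in its absence one can only conclude boundedness (input-to-state stability) of $e_{d}$ with an ultimate bound proportional to $\sup_{t}|\dot{d}(t)|/l_{1}$. Flagging this is worthwhile because the SLDO developed subsequently is precisely the mechanism intended to cover the case in which $\dot{d}$ does not vanish.
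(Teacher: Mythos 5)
Your proposal is correct: the paper does not prove this lemma itself but simply cites it from the reference, and your explicit solution $e_{d}(t)=e_{d}(0)e^{-l_{1}t}$ (equivalently the Lyapunov argument with $V=\tfrac{1}{2}e_{d}^{2}$) is exactly the standard one-line argument the citation stands in for. Your remark that, without Assumption \ref{assumption_do_1}, one only gets an ultimate bound on $e_{d}$ proportional to $\sup_{t}|\dot{d}(t)|/l_{1}$ is a correct and worthwhile clarification consistent with Remark \ref{remark_do_2}.
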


Lemma \ref{lemma_do_1} implies that the estimated disturbance by the BNDO can track the actual disturbance of the system in \eqref{eq_nonlinearsystem} exponentially if the disturbance has a steady-state value. 

\subsubsection{Controller Design}

\begin{figure}[t!]
  \centering
  \includegraphics[width=0.85\columnwidth]{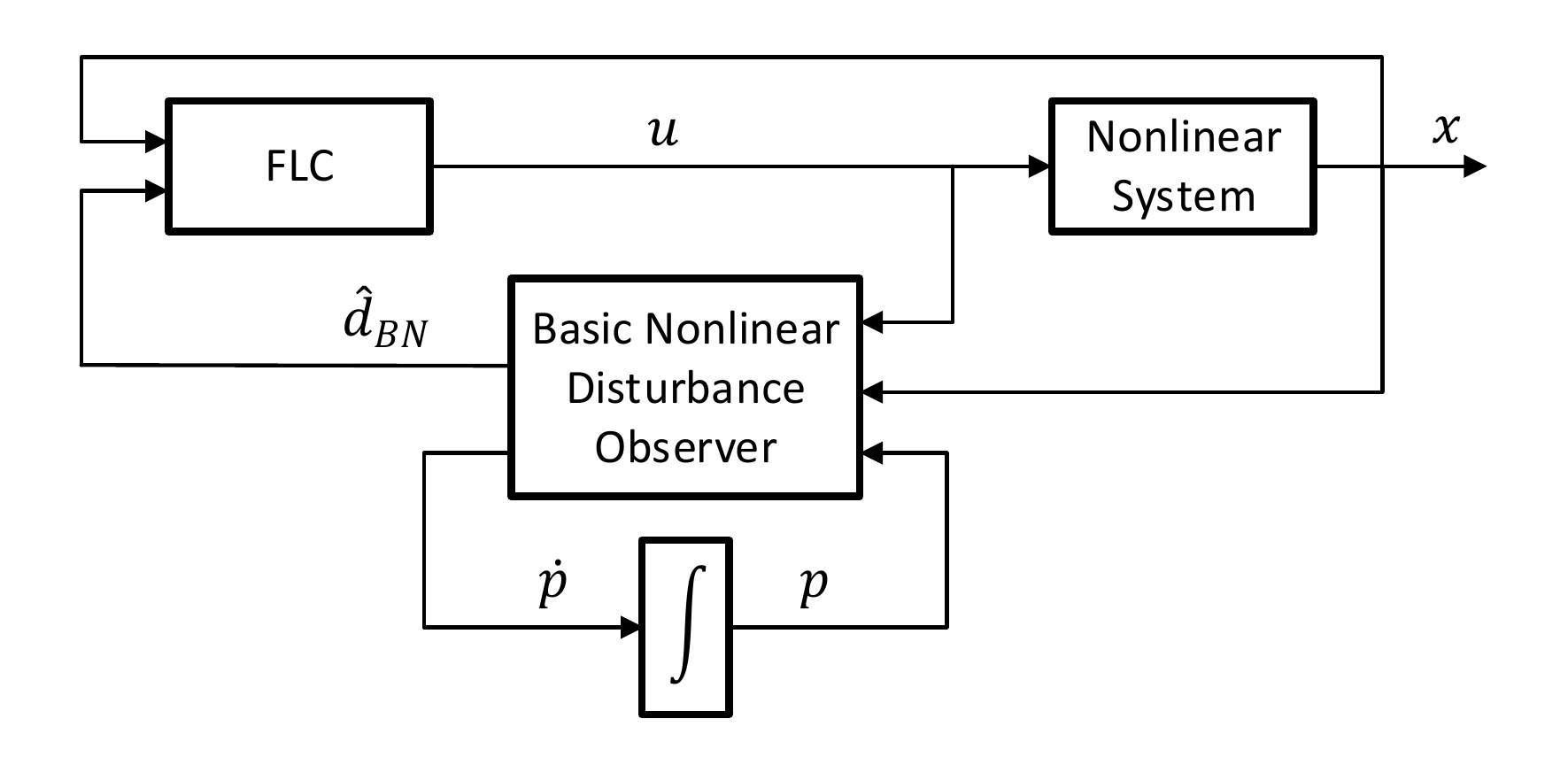}\\
  \caption{The diagram of the disturbance observer (DO) }\label{fig_do_diagram}
\end{figure}

A novel FLC law based on the BNDO to handle mismatched time-invariant uncertainties as shown in Fig. \ref{fig_do_diagram} is formulated as follows:
\begin{equation} \label{eq_flcdo_controllaw}
u = -b^{-1} (\textbf{x}) \Big( a(\textbf{x}) +  k_{1} x_{1} + k_{2} (x_{2} + \hat{d}_{BN}) \Big)
\end{equation}
If the aforementioned control law in \eqref{eq_flcdo_controllaw} is applied to the second-order nonlinear system in \eqref{eq_nonlinearsystem}, the closed-loop dynamics are obtained as
\begin{equation}\label{eq_flcdo_error}
\ddot{x}_{1} + k_{2} \dot{x}_{1} + k_{1} x_{1} = k_{2} e_{d} + \dot{d}
\end{equation}

If Assumption \ref{assumption_do_1} is satisfied, the closed-loop dynamics are obtained as follows:
\begin{equation}\label{eq_flcdo_error_2}
\ddot{x}_{1} + k_{2} \dot{x}_{1} + k_{1} x_{1} = k_{2} e_{d}
\end{equation}

\begin{lemma}\label{lemma_do_2}
\cite{khalil1996nonlinear}: If a nonlinear system $\dot{\textbf{x}}=F(\textbf{x},\textbf{u})$ is input-state stable and $\displaystyle \lim_{ t \to \infty} \textbf{u}(t) = 0$, then the state $\displaystyle \lim_{ t \to \infty} \textbf{x}(t) = 0$.
\end{lemma}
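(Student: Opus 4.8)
The plan is to deduce the statement from the $\varepsilon$--$\delta$ (equivalently, $\mathcal{KL}$/$\mathcal{K}$) characterisation of input-to-state stability given in \cite{khalil1996nonlinear}. Recall that input-state stability of $\dot{\textbf{x}} = F(\textbf{x},\textbf{u})$ furnishes a class-$\mathcal{KL}$ function $\beta$ and a class-$\mathcal{K}$ function $\gamma$ such that, for any initial time $t_{0}\ge 0$, any initial state, and any admissible input,
\[
\|\textbf{x}(t)\| \;\le\; \beta\big(\|\textbf{x}(t_{0})\|,\, t-t_{0}\big) \;+\; \gamma\Big(\sup_{t_{0}\le \tau \le t}\|\textbf{u}(\tau)\|\Big),\qquad t\ge t_{0}.
\]
I would first apply this with $t_{0}=0$: since $\textbf{u}(t)\to 0$, the quantity $M:=\sup_{\tau\ge 0}\|\textbf{u}(\tau)\|$ is finite, so $\|\textbf{x}(t)\|\le \beta(\|\textbf{x}(0)\|,0)+\gamma(M)$ on the maximal interval of existence; this uniform bound rules out finite escape time and shows the trajectory is bounded, so in particular $\textbf{x}(t_{1})$ is finite for every $t_{1}\ge 0$ and the estimate may be legitimately restarted from any such $t_{1}$.

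Next, fix $\varepsilon>0$. Because $\gamma$ is continuous with $\gamma(0)=0$ and $\sup_{\tau\ge t_{1}}\|\textbf{u}(\tau)\|\to 0$ as $t_{1}\to\infty$, I would choose $t_{1}$ so large that $\gamma\big(\sup_{\tau\ge t_{1}}\|\textbf{u}(\tau)\|\big)<\varepsilon/2$. Restarting the ISS estimate at this fixed $t_{1}$ then gives
\[
\|\textbf{x}(t)\| \;\le\; \beta\big(\|\textbf{x}(t_{1})\|,\, t-t_{1}\big) + \frac{\varepsilon}{2},\qquad t\ge t_{1}.
\]
Since $\beta(\|\textbf{x}(t_{1})\|,\cdot)$ is nonincreasing and tends to $0$ at infinity, there exists $t_{2}\ge t_{1}$ with $\beta(\|\textbf{x}(t_{1})\|,\,t-t_{1})<\varepsilon/2$ for all $t\ge t_{2}$, whence $\|\textbf{x}(t)\|<\varepsilon$ for $t\ge t_{2}$. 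As $\varepsilon>0$ was arbitrary, $\lim_{t\to\infty}\textbf{x}(t)=0$.

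The only subtlety --- and the step I would be most careful about --- is the two-stage selection of times: the two terms in the ISS bound cannot be forced small by a single threshold, so one must first \emph{freeze} $t_{1}$ to annihilate the $\gamma$-term using convergence of the input, and only afterwards exploit the $\mathcal{KL}$-decay of the transient emanating from the now fixed, finite state $\textbf{x}(t_{1})$. Everything else is routine bookkeeping: besides the standing requirement that solutions exist on $[0,\infty)$ --- which, as noted, the ISS bound itself secures for a bounded input --- no regularity of $F$ or $\textbf{u}$ beyond that needed for well-posedness is used.
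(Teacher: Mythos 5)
Your proof is correct. The paper offers no argument of its own for this lemma --- it is stated as a citation to Khalil --- and your two-stage argument (first bound the trajectory to rule out finite escape, then freeze $t_{1}$ to make the $\gamma$-term small via the vanishing input, and finally let the $\mathcal{KL}$-transient from the finite state $\textbf{x}(t_{1})$ decay) is exactly the standard proof of this converging-input/converging-state property of ISS systems found in the cited reference, including the one genuinely delicate point you flag, namely that the two terms of the ISS estimate must be handled by separate choices of time rather than a single threshold.
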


The closed-loop dynamics in \eqref{eq_flcdo_error_2} are stable if the coefficients are positive, i.e., $k_{1}, k_{2}>0$. As indicated in Lemma \ref{lemma_do_2},  if the input satisfies $\displaystyle \lim_{ t \to \infty} e_{d}(t) = 0$, then the state satisfies $\displaystyle \lim_{ t \to \infty} x_{1}(t) = 0$. 

The closed-loop error dynamics in \eqref{eq_do_error_2} and \eqref{eq_flcdo_error_2} are combined as follows:
\begin{eqnarray}\label{eq_flcdo_system}
\dot{e}_{d}  + l_{1}   e_{d} & = & 0 \nonumber \\
\ddot{x}_{1} + k_{2} \dot{x}_{1} + k_{1} x_{1} & = & k_{2} e_{d}
\end{eqnarray}
The closed-loop error dynamics are globally exponentially stable under the given condition $k_{1}, k_{2}, l_{1}>0$ and the states satisfy $\displaystyle \lim_{ t \to \infty} x_{1}(t) = 0$ and $\displaystyle \lim_{ t \to \infty} e_{d}(t) = 0$.  This implies that the states can be driven to the desired equilibrium point. 

\begin{remark} \label{remark_do_1}
The FLC based on the BNDO (FLC-BNDO) is robust to mismatched time-invariant uncertainties. If there exists no disturbance, the FLC-BNDO maintains the nominal control performance as distinct from the FLC-I. 
\end{remark}

\begin{remark} \label{remark_do_2}
If the time-derivative of the disturbance $\dot{d}$ is time-varying, the error dynamics of the BNDO cannot converge to zero. Therefore, there exists always a difference between the estimated and true values of the disturbance so that the FLC-BNDO cannot be robust to mismatched time-varying uncertainties.
\end{remark}


\section{NEW FLC FOR MISMATCHED TIME-VARVARYING UNCERTAINTIES}\label{sec_flcsldo}

\subsection{Self-Learning Disturbance Observer}\label{sec_SLDO}

An SLDO is developed in this section due to the fact that the BNDO \eqref{eq_disturbanceobserver} cannot estimate time-varying disturbances as stated in Remark \ref{remark_do_2}. In this investigation, feedback-error learning (FEL) scheme in which the conventional estimation law works in parallel with a T2NFS is proposed as the diagram of the SLDO is illustrated in Fig \ref{fig_sldo_diagram}. The new estimation law is proposed as follows: 
\begin{equation}\label{eq_sldo_estimationlaw}
\tau = \tau_{c} - \tau_{n}  
\end{equation}
where $\tau_{c}$ and $\tau_{n}$ denote respectively the outputs of the conventional estimation law and T2NFS. A conventional proportional-derivative estimation law in FEL scheme is used to guarantee the global asymptotic stability of the SLDO in a compact space. The input to FEL algorithm is the output of the BNDO and the time-derivative of its output $\dot{\hat{d}}_{BN}$ is equal to the multiplication of the disturbance estimation error and the observer gain of the BNDO as can be seen from \eqref{eq_do_estimationsignal2}. Therefore, the conventional estimation law is formulated as follows:
\begin{equation}\label{eq_sldo_tauc}
\tau_{c} = \dot{\hat{d}}_{BN}  +  \eta \ddot{\hat{d}}_{BN} 
\end{equation}
where $\eta$ is the coefficient and positive, i.e., $\eta>0$, and $d_{BN}$ is the output of the BNDO. 
\begin{figure}[t!]
  \centering
  \includegraphics[width=\columnwidth]{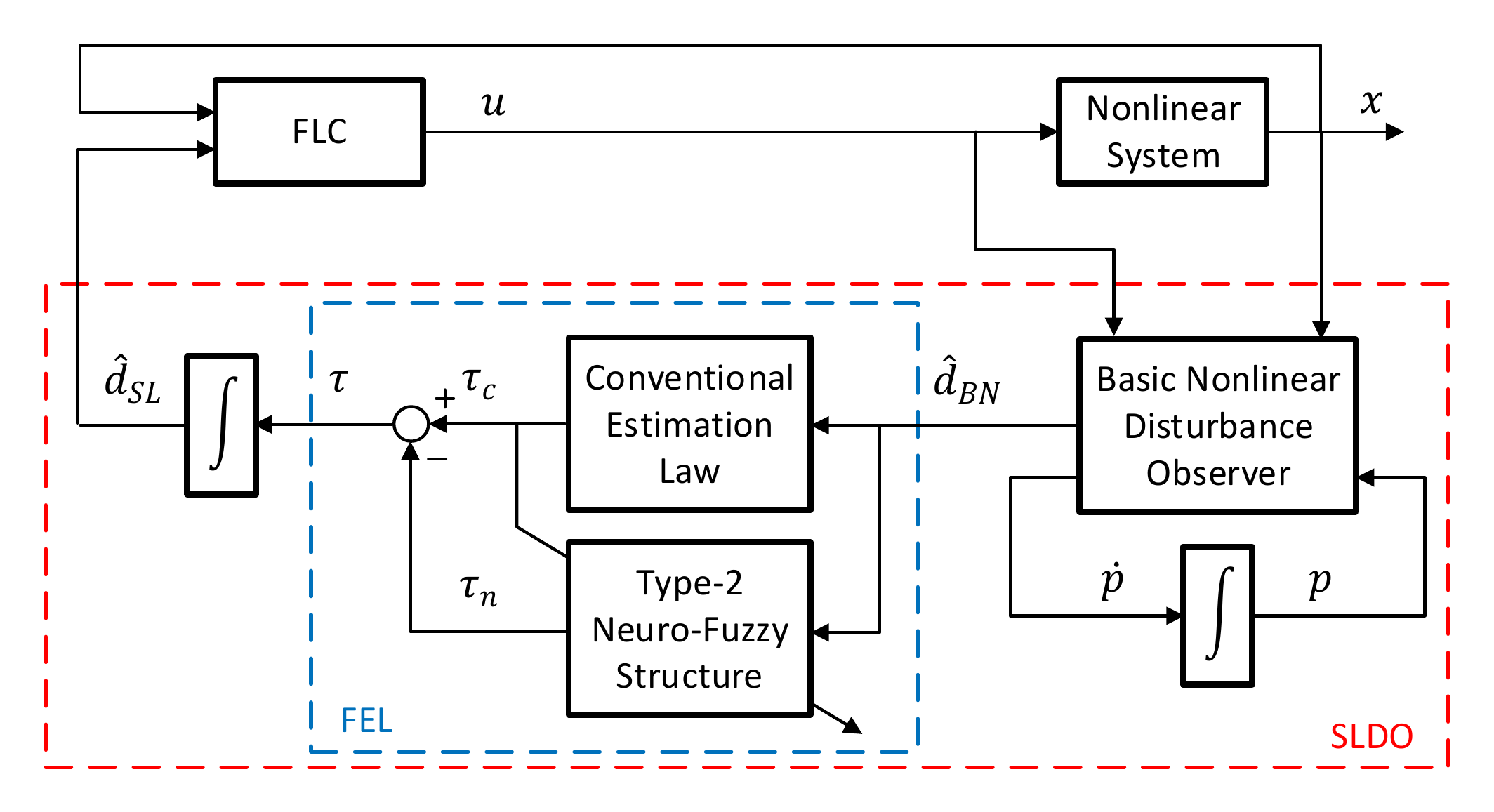}\\
  \caption{The diagram of the self-learning disturbance observer (SLDO) }\label{fig_sldo_diagram}
\end{figure}

\subsubsection{Type-2 Neuro-Fuzzy Structure}\label{sec_neuro-fuzzy}

An interval type-2 Takagi-Sugeno-Kang fuzzy \emph{if-then} rule $R_{ij}$ is written as:
\begin{equation}\label{eq_Rlinearfunction}
R_{ij}: \;\; \textrm{If} \; \xi_1 \; \textrm{is} \;\; \widetilde{1}_i \;\; \textrm{and} \; \xi_2 \; \textrm{is} \;\; \widetilde{2}_j, \;\; \textrm{then} \; f_{ij}=\Upsilon_{ij}
\end{equation}
where $\xi_1=\dot{\hat{d}}_{BN} $ and $\xi_2= \ddot{\hat{d}}_{BN}$ denote the inputs while $\widetilde{1}_i$ and $\widetilde{2}_j$  denote type-2 fuzzy sets for inputs. The function $f_{ij}$ is the output of the rules and the total number of the rules are equal to $K=I \times J$ in which $I$ and $J$ are the total number of the  inputs.

The upper and lower Gaussian membership functions for type-2 fuzzy logic systems are written as follows:
\begin{eqnarray}\label{eq_mu1_lower}
\underline{\mu}_{1i}(\xi_1) = \exp\Bigg(-\bigg(\frac{\xi_{1}-\underline{c}_{1i}}{\underline{\sigma}_{1i}}\bigg)^2\Bigg)
\\ \label{eq_mu1_upper}
\overline{\mu}_{1i}(\xi_1) = \exp\Bigg(-\bigg(\frac{\xi_{1}-\overline{c}_{1i}}{\overline{\sigma}_{1i}}\bigg)^2\Bigg)
\\ \label{eq_mu2_lower}
\underline{\mu}_{2j}(\xi_2) = \exp\Bigg(-\bigg(\frac{\xi_{2}-\underline{c}_{2j}}{\underline{\sigma}_{2j}}\bigg)^2\Bigg)
\\ \label{eq_mu2_upper}
\overline{\mu}_{2j}(\xi_2) = \exp\Bigg(-\bigg(\frac{\xi_{2}-\overline{c}_{2j}}{\overline{\sigma}_{2j}}\bigg)^2\Bigg)
\end{eqnarray}
where $\underline{c}, \overline{c}, \underline{\sigma}, \overline{\sigma}$ denote respectively the lower and upper mean, and the lower and upper standard deviation of the membership functions. These parameters are adjustable for the T2NFS in which the standard deviations are positive, i.e., $ \underline{\sigma}, \overline{\sigma} > 0$.

The lower and upper membership functions  $\underline{\mu }$ and $\overline{\mu }$ are determined for every signal. Then, the firing strength of rules are calculated as follows:
\begin{equation}\label{eq_wij_lower_upper}
\underline{w}_{ij} = \underline{\mu}_{1i}(\xi_1)  \underline{\mu}_{2j}(\xi_2), \quad 
\overline{w}_{ij} = \overline{\mu}_{1i}(\xi_1) \overline{\mu}_{2j}(\xi_2)
\end{equation}

The output of the every fuzzy rule is a linear function $f_{ij}$ formulated in \eqref{eq_Rlinearfunction}.  The output of the network is formulated below:
\begin{equation}\label{eq_taun}
\tau_n=q \sum_{i=1}^{I}\sum_{j=1}^{J}f_{ij}\widetilde{\underline{w}}_{ij}+(1-q)\sum_{i=1}^{I}\sum_{j=1}^{J}f_{ij}\widetilde{\overline{w}}_{ij}
\end{equation}
where $\widetilde{\underline{{w}}}_{ij}$ and $\widetilde{\overline{{w}}}_{ij}$ are the normalized firing strengths of the lower and upper output signals of the neuron $ij$ are written as follows:
\begin{equation}\label{eq_wij_lower_upper_normalized}
\widetilde{\underline{w}}_{ij} = \frac{\underline{w}_{ij}}{\sum_{i=1}^{I}\sum_{j=1}^{J}\underline{w}_{ij}}, \quad
\widetilde{\overline{w}}_{ij} = \frac{\overline{w}_{ij}}{\sum_{i=1}^{I}\sum_{j=1}^{J}\overline{w}_{ij}}
\end{equation}
The design parameter $q$ weights the participation of the lower and upper firing levels and is generally set to $0.5$. In this paper, it is formulated as a time-varying parameter in the next subsection. 

The vectors are defined as::
\begin{eqnarray}
\widetilde{\underline{\textbf{W}}}&=& [\widetilde{\underline{w}}_{11}\; \widetilde{\underline{w}}_{12} \dots \widetilde{\underline{w}}_{21}\dots \widetilde{\underline{w}}_{ij} \dots \widetilde{\underline{w}}_{IJ}]^{T} \nonumber \\
\widetilde{\overline{\textbf{W}}}&=& [\widetilde{\overline{w}}_{11} \; \widetilde{\overline{w}}_{12} \dots \widetilde{\overline{w}}_{21} \dots \widetilde{\overline{w}}_{ij} \dots \widetilde{\overline{w}}_{IJ}]^{T} \nonumber \\
\textbf{F} &=& [f_{11} \; f_{12} \dots f_{21} \dots f_{ij} \dots f_{IJ}] \nonumber
\end{eqnarray}
where these normalized firing strengths are between $0$ and $1$, i.e., $0<\widetilde{\underline{w}}_{ij} \leq 1$ and $0<\widetilde{\overline{w}}_{ij} \leq 1$. In addition, $\sum_{i=1}^{I}\sum_{j=1}^{J}\widetilde{\underline{w}}_{ij} = 1$ and $\sum _{i=1}^{I}\sum_{j=1}^{J}\widetilde{\overline{w}}_{ij} = 1$.

\subsubsection{SMC Theory-based Learning Algorithm}\label{sec_SMClearning}

In FEL algorithm, the output of the conventional estimation law $\tau_{c}$ must converge to zero while T2NFS is learning; therefore, the proposed sliding surface $s$ for the learning algorithm contains the conventional estimation law and is formulated as follows:
\begin{equation}\label{eq_sldo_slidingfsurface}
s = \tau_{c} +\frac{ \ddot{\hat{d}}_{BN} } {l_{1}}
\end{equation}
where $s$ is used as the learning error to train the learning algorithm.

Novel adaptation rules for the T2NFS parameters are proposed in this investigation by the following equations:
\begin{equation} \label{eq_c_1i}
\dot{\underline{c}}_{1i} = \dot{\overline{c}}_{1i} = \dot{\xi}_{1} + \xi_{1} \alpha \textrm{sgn}\left( s \right)
\end{equation}
\begin{equation} \label{eq_c_2j}
\dot{\underline{c}}_{2j} = \dot{\overline{c}}_{2j} = \dot{\xi}_{2} + \xi_{2}\alpha \textrm{sgn}\left( s  \right)
\end{equation}
\begin{equation}\label{eq_sigma_1i_lower}
\dot{\underline{\sigma}}_{1i} = - \frac{\underline{\sigma}_{1i}}{\xi_{1} - \underline{c}_{1i}} \bigg( \xi_{1} + \frac{ (\underline{\sigma}_{1i} )^2}{\xi_{1} - \underline{c}_{1i}} \bigg) \alpha  \textrm{sgn}\left( s  \right)
\end{equation}
\begin{equation}\label{eq_sigma_1i_upper}
\dot{\overline{\sigma}}_{1i} = - \frac{\overline{\sigma}_{1i}}{\xi_{1} - \overline{c}_{1i}} \bigg( \xi_{1} + \frac{ (\overline{\sigma}_{1i} )^2}{\xi_{1} - \overline{c}_{1i}} \bigg) \alpha  \textrm{sgn}\left(  s \right)
\end{equation}
\begin{equation}\label{eq_sigma_2j_lower}
\dot{\underline{\sigma}}_{2j} = - \frac{\underline{\sigma}_{2j}}{\xi_{2} - \underline{c}_{2j}} \bigg( \xi_{2} + \frac{ (\underline{\sigma}_{2j} )^2}{\xi_{2} - \underline{c}_{2j}} \bigg) \alpha  \textrm{sgn}\left(  s \right)
\end{equation}
\begin{equation}\label{eq_sigma_2j_upper}
\dot{\overline{\sigma}}_{2j} = - \frac{\overline{\sigma}_{2j}}{\xi_{2} - \overline{c}_{2j}} \bigg( \xi_{2} + \frac{ (\overline{\sigma}_{2j} )^2}{\xi_{2} - \overline{c}_{2j}} \bigg) \alpha  \textrm{sgn}\left(  s  \right)
\end{equation}
\begin{equation}\label{eq_f_ij}
\dot{f}_{ij} =-\frac{q \widetilde{\underline{w}}_{ij}+ (1-q)\widetilde{\overline{w}}_{ij}}{(q\widetilde{\underline{\textbf{W}}}+(1-q) \widetilde{\overline{\textbf{W}}})^T(q\widetilde{\underline{\textbf{W}}}+ (1-q)\widetilde{\overline{\textbf{W}}})}\alpha sgn( s )
\end{equation}
\begin{equation}\label{eq_q}
\dot{q} =-\frac{1}{\textbf{F}(\widetilde{\underline{\textbf{W}}}-\widetilde{\overline{\textbf{W}}})^{T}}\alpha sgn(  s  )
\end{equation}
where $\alpha$ is the learning rate and positive, i.e., $\alpha>0$.

\begin{theorem}[Stability of the learning algorithm]\label{theorem1}
If adaptations rules are proposed as in \eqref{eq_c_1i}-\eqref{eq_q} and the learning rate $\alpha$ is large enough, i.e., $\alpha > \dot{\tau}^{*}  \frac{ \mid\tau_{c} + \dot{e}_{d} \mid + \mid \dot{e}_{d} \mid}{ \mid\tau_{c} + \dot{e}_{d} \mid - \mid \dot{e}_{d} \mid}$ where $\dot{\tau}^{*}$ is the upper bound of $\dot{\tau}$, this ensures that $\tau_{c}$ will converge to zero in finite time for a given arbitrary initial condition $\tau_{c}(0)$.  
\end{theorem}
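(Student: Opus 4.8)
The plan is to exhibit a Lyapunov function on the sliding surface and show that the adaptation rules \eqref{eq_c_1i}--\eqref{eq_q} force its derivative to be strictly negative whenever $s \neq 0$, which then yields finite-time convergence of $s$ to zero, and hence of $\tau_c$. I would take $V = \tfrac{1}{2}s^2$ as the candidate, so that $\dot V = s\dot s$, and the whole argument reduces to establishing a bound of the form $s\dot s \le -\beta |s|$ for some $\beta > 0$, which is the standard $\eta$-reachability condition guaranteeing that $s$ reaches zero in a time bounded by $|s(0)|/\beta$.

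First I would differentiate the sliding surface $s = \tau_c + \ddot{\hat d}_{BN}/l_1$ from \eqref{eq_sldo_slidingfsurface}. Using $\dot{\hat d}_{BN} = l_1 e_d$ from \eqref{eq_do_estimationsignal2} one can re-express $\ddot{\hat d}_{BN}/l_1 = \dot e_d$ and, via \eqref{eq_sldo_tauc}, rewrite $\tau_c$ and hence $s$ in terms of $e_d$, $\dot e_d$ (this is where the quantities $\tau_c + \dot e_d$ appearing in the theorem's hypothesis come from). Then $\dot s = \dot\tau_c + \ddot e_d$; but since $\tau = \tau_c - \tau_n$ is the estimation signal, $\dot\tau_c = \dot\tau + \dot\tau_n$, so $\dot s = \dot\tau + \dot\tau_n + (\text{terms in }\ddot e_d)$. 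The key is that $\dot\tau_n$ is a total time derivative of \eqref{eq_taun}, expanded by the chain rule over all adjustable parameters $\underline c, \overline c, \underline\sigma, \overline\sigma, f_{ij}, q$. I would compute $\partial \tau_n/\partial(\cdot)$ for each parameter group — these are routine derivatives of the Gaussians \eqref{eq_mu1_lower}--\eqref{eq_mu2_upper} and of the normalized-weight expressions \eqref{eq_taun}, \eqref{eq_wij_lower_upper_normalized} — and substitute the prescribed time derivatives from \eqref{eq_c_1i}--\eqref{eq_q}. The adaptation rules are \emph{designed} so that after this substitution the "$\dot\xi$-type" contributions telescope against the intrinsic $\dot\tau_n$ terms coming from the time-variation of the inputs $\xi_1 = \dot{\hat d}_{BN}$, $\xi_2 = \ddot{\hat d}_{BN}$, and every remaining term carries a common factor $\alpha\,\mathrm{sgn}(s)$. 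Thus $\dot\tau_n$ collapses to something of the form $\alpha\,\mathrm{sgn}(s)$ (up to a bounded coefficient), and therefore $\dot s = \dot\tau + \alpha\,\mathrm{sgn}(s) + \dots$, whence $s\dot s = \alpha|s| \cdot(\text{sign bookkeeping}) + s\dot\tau + \dots$. Collecting the non-sign-definite pieces and bounding them by $\dot\tau^* $ times the ratio $(|\tau_c+\dot e_d| + |\dot e_d|)/(|\tau_c+\dot e_d| - |\dot e_d|)$ gives $s\dot s \le -(\alpha - \dot\tau^*\,\text{ratio})|s| \cdot c$ for a positive constant $c$; the hypothesis on $\alpha$ makes the bracket positive, so $V$ decreases to zero in finite time. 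Finally, on $s=0$ the surface reads $\tau_c = -\dot e_d$, and combining this with the BNDO error dynamics \eqref{eq_do_error} (or \eqref{eq_do_error_2} under Assumption~\ref{assumption_do_1}) forces $\tau_c \to 0$; I would close by invoking that $\dot{\hat d}_{BN}$ and $\ddot{\hat d}_{BN}$ both vanish as $e_d \to 0$, so $\tau_c$ in \eqref{eq_sldo_tauc} is driven to zero as claimed.

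The main obstacle is the algebra in showing that $\dot\tau_n$, after substituting \eqref{eq_c_1i}--\eqref{eq_q}, really does reduce to a single $\alpha\,\mathrm{sgn}(s)$ term: one has to verify that the derivatives of the Gaussian membership functions with respect to $\underline\sigma$ and $\underline c$ combine so that the awkward rational prefactors in \eqref{eq_sigma_1i_lower}--\eqref{eq_sigma_2j_upper} exactly cancel the Jacobian $\partial\tau_n/\partial(\cdot)$, and likewise that the $f_{ij}$- and $q$-updates \eqref{eq_f_ij}--\eqref{eq_q} are scaled precisely to kill the normalization denominators. A secondary subtlety is making the denominator $|\tau_c + \dot e_d| - |\dot e_d|$ genuinely positive (a reachability/controllability-type condition on the observer, implicitly assumed), and handling the singularities when any $\xi_i - c_i \to 0$ in the $\sigma$-update laws — I would note these as the compact-set / non-degeneracy caveats under which the "global asymptotic stability in a compact space" mentioned in Section~\ref{sec_SLDO} is to be read.
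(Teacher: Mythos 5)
You correctly identified the one genuinely nontrivial ingredient: after substituting the adaptation laws \eqref{eq_c_1i}--\eqref{eq_q} into the chain-rule expansion of $\dot{\tau}_n$, everything collapses to a pure switching term (the paper's Appendix gets $\dot{\tau}_n=-2\alpha\,\mathrm{sgn}(s)$), and you also correctly flagged the hidden positivity requirement on $|\tau_c+\dot e_d|-|\dot e_d|$ and the singularities in the $\sigma$-updates. However, your choice of Lyapunov function creates a genuine gap. The paper proves this theorem with $V=\tfrac12\tau_c^2$, not $V=\tfrac12 s^2$: it computes $\dot V=\tau_c\dot\tau_c=\tau_c(\dot\tau_n+\dot\tau)=\tau_c\bigl(-2\alpha\,\mathrm{sgn}(\tau_c+\dot e_d)+\dot\tau\bigr)$ and then splits $\tau_c=(\tau_c+\dot e_d)-\dot e_d$ so that the first factor pairs with the sign of its own argument; the ratio $\bigl(|\tau_c+\dot e_d|+|\dot e_d|\bigr)/\bigl(|\tau_c+\dot e_d|-|\dot e_d|\bigr)$ in the hypothesis is exactly the artifact of that splitting. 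Your route with $V=\tfrac12 s^2$ is the paper's proof of Theorem~\ref{theorem2}, and it leads somewhere else: $\dot s=\dot\tau_c+\ddot e_d=-2\alpha\,\mathrm{sgn}(s)+\dot\tau+\ddot e_d$ (or, in the paper's variables, a term in $\ddot d$), so the reaching condition you obtain is $\alpha$ dominating a bound on $\ddot d$ --- which is precisely the hypothesis of Theorem~\ref{theorem2}, not the stated hypothesis $\alpha>\dot\tau^*\,(|\tau_c+\dot e_d|+|\dot e_d|)/(|\tau_c+\dot e_d|-|\dot e_d|)$. Your sentence ``collecting the non-sign-definite pieces and bounding them by $\dot\tau^*$ times the ratio'' is where the proof would actually have to happen, and that ratio does not emerge from the $s\dot s$ computation; it only emerges from the $\tau_c\dot\tau_c$ computation with the specific decomposition above.

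There is a second, independent gap at the end. Even granting that your argument drives $s$ to zero in finite time, on the surface you only get $\tau_c=-\dot e_d$, and you then need $\dot e_d\to 0$ to conclude $\tau_c\to 0$. The BNDO error dynamics give at best exponential (hence asymptotic, not finite-time) decay of $e_d$, and only under Assumption~\ref{assumption_do_1} --- an assumption Theorem~\ref{theorem1} does not invoke and which is exactly what the SLDO is built to avoid. So your version delivers a strictly weaker conclusion (asymptotic convergence of $\tau_c$ under extra hypotheses) than the finite-time claim. The fix is simply to run the Lyapunov analysis on $\tau_c$ itself, where the switching term $-2\alpha\,\mathrm{sgn}(\tau_c+\dot e_d)$ acts (after the paper's splitting) directly on $\tau_c$ and the stated threshold on $\alpha$ is what makes $\dot V$ negative.
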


\begin{proof}
The Lyapunov function is written as follows:
\begin{equation}\label{eq_sldo_Lyapunov}
V = \frac{1}{2} \tau_{c}^{2} 
\end{equation}
By taking the time-derivative of the Lyapunov function in \eqref{eq_sldo_Lyapunov}, it is obtained as
\begin{equation}\label{eq_sldo_Lyapunov_dot} 
\dot{V} =   \tau_{c} \dot{\tau}_{c} 
\end{equation}
If the time-derivative of the conventional estimation law $\dot{\tau}_{c}$ is derived from  \eqref{eq_sldo_estimationlaw} and then is inserted into the equation above:
\begin{equation}\label{eq_sldo_Lyapunov_dot2} 
\dot{V} =   \tau_{c} (\dot{\tau}_{n} + \dot{\tau} ) 
\end{equation}
The calculation of $\dot{\tau}_{n}$ in \eqref{dotVc4} is inserted into \eqref{eq_sldo_Lyapunov_dot2}, it is obtained as follows:
\begin{equation}\label{eq_sldo_Lyapunov_dot4}
\dot{V} =   \tau_{c} \Big( -2 \alpha  \textrm{sgn}\left( s \right)  +  \dot{\tau}  \Big) 
\end{equation}
The sliding surface in \eqref{eq_sldo_slidingfsurface} is inserted into the aforementioned equation
\begin{equation}\label{eq_sldo_Lyapunov_dot5}
\dot{V} =   \tau_{c} \Bigg( -2 \alpha  \textrm{sgn}\left( \tau_{c} +\frac{ \ddot{\hat{d}}_{BN} } {l_{1}} \right)  +  \dot{\tau}  \Bigg) 
\end{equation}
It is obtained considering \eqref{eq_do_estimationsignal2} as follows:
\begin{eqnarray}\label{eq_sldo_Lyapunov_dot6}
\dot{V} &=&   \tau_{c} \Big( -2 \alpha  \textrm{sgn}\left( \tau_{c} +\dot{e}_{d} \right)  +  \dot{\tau}  \Big)  \nonumber \\
&=&  \Big( \tau_{c} + \dot{e}_{d} \Big) \Big( -2 \alpha  \textrm{sgn}\left( \tau_{c} +\dot{e}_{d} \right)  +  \dot{\tau}  \Big)  \nonumber \\
 && - \dot{e}_{d}  \Big( -2 \alpha  \textrm{sgn}\left( \tau_{c} +\dot{e}_{d} \right)  +  \dot{\tau}  \Big) 
\end{eqnarray}

If it is assumed that $\dot{\tau}$ is upper bounded by $\dot{\tau}^{*}$, \eqref{eq_sldo_Lyapunov_dot6} is obtained as follows:
\begin{eqnarray}\label{eq_sldo_Lyapunov_dot8}
\dot{V} &<&  \mid\tau_{c} + \dot{e}_{d} \mid ( -2 \alpha  +  \dot{\tau}^{*}  ) + \mid \dot{e}_{d} \mid ( 2 \alpha +  \dot{\tau}^{*}  )  \nonumber \\
&<&  - 2 \alpha \Big( \mid\tau_{c} + \dot{e}_{d} \mid - \mid \dot{e}_{d} \mid \Big) \nonumber \\
&& + \dot{\tau}^{*} \Big( \mid\tau_{c} + \dot{e}_{d} \mid + \mid \dot{e}_{d} \mid \Big) \nonumber \\
&< & - 2 \alpha + \dot{\tau}^{*}  \frac{ \mid\tau_{c} + \dot{e}_{d} \mid + \mid \dot{e}_{d} \mid}{ \mid\tau_{c} + \dot{e}_{d} \mid - \mid \dot{e}_{d} \mid}
\end{eqnarray}

As stated in Theorem \ref{theorem1}, if the learning rate $\alpha$ is large enough, i.e., $\alpha> \dot{\tau}^{*}  \frac{ \mid\tau_{c} + \dot{e}_{d} \mid + \mid \dot{e}_{d} \mid}{ \mid\tau_{c} + \dot{e}_{d} \mid - \mid \dot{e}_{d} \mid}$, then the time-derivative of the Lyapunov function is negative, i.e., $\dot{V}<0$, so that the SMC theory-based learning algorithm is globally asymptotically stable and $\tau_{c}$ will converge to zero in finite time.
\end{proof}

%

\subsubsection{Stability of SLDO}\label{sec_stabilitySLDO}

The proposed SLDO law in \eqref{eq_sldo_estimationlaw} is re-written taking \eqref{eq_sldo_tauc} into account as follows:
\begin{eqnarray}\label{eq_obs_estimation_law_self_stability_ilk}
\tau =  \dot{\hat{d}}_{BN}  + \eta \ddot{\hat{d}}_{BN}  - \tau_{n}
\end{eqnarray}

The error dynamics for the SLDO are obtained by adding the actual disturbance rate $\dot{d}$ into the estimated disturbance rate in \eqref{eq_obs_estimation_law_self_stability_ilk} and considering the dynamics of the time-derivative of the estimated disturbance by BNDO in \eqref{eq_do_estimationsignal} and $\tau=\dot{\hat{d}}_{SL}$:
\begin{eqnarray}\label{eq_obs_estimation_law_self_stability}
 \dot{d} - \dot{\hat{d}}_{SL}  &=&  - \dot{\hat{d}}_{BN}  - \eta  \ddot{\hat{d}}_{BN}  + \tau_{n}+ \dot{d} \nonumber \\
\dot{e}_{d}  &=& - l_{1} e_{d} - \eta l_{1} \dot{e}_{d}   + \tau_{n} + \dot{d} 
\end{eqnarray}

By taking the time-derivative of \eqref{eq_obs_estimation_law_self_stability}, it is obtained as follows:
\begin{equation}\label{eq_obs_error_self_2}
\ddot{e}_{d}  =   \frac{  -l_{1} \dot{e}_{d}  + \dot{\tau}_{n} + \ddot{d} } { 1+ \eta l_{1} }
\end{equation}
As calculated in Appendix, $\dot{\tau}_{n}=- 2\alpha sgn(s)$ is inserted into \eqref{eq_obs_error_self_2};
\begin{equation}\label{eq_obs_error_self_3}
\ddot{e}_{d}  =   \frac{  -l_{1} \dot{e}_{d}   - 2\alpha sgn(s) + \ddot{d}} {1 + \eta l_{1}}
\end{equation}

If  $\tau_{c}$ in \eqref{eq_sldo_tauc} is inserted into \eqref{eq_sldo_slidingfsurface}, the sliding surface is obtained as follows:
\begin{eqnarray}\label{eq_slidingfsurface_2}
s \left( \dot{\hat{d}}_{BN}, \ddot{\hat{d}}_{BN} \right)  = ( \eta + \frac{1}{l_{1}} ) \ddot{\hat{d}}_{BN} + \dot{\hat{d}}_{BN} 
\end{eqnarray}

\begin{theorem}[Stabiltiy of the SLDO]\label{theorem2}
The estimation law in \eqref{eq_sldo_estimationlaw} is employed as a DO, the closed-loop error dynamics for the SLDO are asymptotically stable if the learning rate of the T2NFS $\alpha$ is large enough, i.e., $\alpha > \ddot{d}^{*} $ where the acceleration of the actual disturbance $\ddot{d}$ is upper bounded by $\ddot{d}^{*}$.
\end{theorem}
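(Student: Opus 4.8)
The plan is to collapse the second-order error dynamics \eqref{eq_obs_error_self_3} into a first-order sliding-mode problem on the surface $s=0$ and then read off exponential stability of the reduced-order dynamics. First I would rewrite the sliding surface \eqref{eq_slidingfsurface_2} purely in terms of the estimation error $e_d$: since \eqref{eq_do_estimationsignal2} gives $\dot{\hat{d}}_{BN}=l_{1}e_{d}$ and hence $\ddot{\hat{d}}_{BN}=l_{1}\dot{e}_{d}$, the surface becomes $s=(1+\eta l_{1})\dot{e}_{d}+l_{1}e_{d}$, a linear combination of the two error states.

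Next I would take the Lyapunov candidate $V=\tfrac{1}{2}s^{2}$, so that $\dot{V}=s\dot{s}$ with $\dot{s}=(1+\eta l_{1})\ddot{e}_{d}+l_{1}\dot{e}_{d}$. Substituting $(1+\eta l_{1})\ddot{e}_{d}=-l_{1}\dot{e}_{d}-2\alpha\,\mathrm{sgn}(s)+\ddot{d}$ from \eqref{eq_obs_error_self_3}, the two $l_{1}\dot{e}_{d}$ terms cancel and one is left with the clean relation $\dot{s}=-2\alpha\,\mathrm{sgn}(s)+\ddot{d}$. Hence $\dot{V}=-2\alpha|s|+s\ddot{d}\le-(2\alpha-\ddot{d}^{*})|s|$, which under the hypothesis $\alpha>\ddot{d}^{*}$ (so that $2\alpha-\ddot{d}^{*}>0$) is strictly negative for $s\neq0$ and in fact satisfies the reachability inequality $\dot{V}\le-(2\alpha-\ddot{d}^{*})\sqrt{2V}$; therefore $s$ is driven to zero in finite time and remains there.

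Once the trajectory is confined to $s=0$, the sliding dynamics follow from $(1+\eta l_{1})\dot{e}_{d}+l_{1}e_{d}=0$, i.e. $\dot{e}_{d}=-\tfrac{l_{1}}{1+\eta l_{1}}e_{d}$. Since $l_{1},\eta>0$ the coefficient $\tfrac{l_{1}}{1+\eta l_{1}}$ is positive, so $e_{d}$, and with it $\dot{e}_{d}$, decays exponentially to zero along the surface. Combining finite-time reaching with the exponentially stable sliding motion yields asymptotic stability of the SLDO error dynamics, i.e. $\displaystyle\lim_{t\to\infty}e_{d}(t)=0$, which is the claim.

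I expect the main obstacle to be bookkeeping rather than anything deep. One must verify that the identity $\dot{\tau}_{n}=-2\alpha\,\mathrm{sgn}(s)$ obtained in the Appendix from the adaptation laws \eqref{eq_c_1i}--\eqref{eq_q} genuinely holds, since this is exactly what makes the relative degree from $s$ to the discontinuous term equal to one (so that $\dot{s}$ contains no $\ddot{e}_{d}$ contribution and the cancellation above works). One must also check that the adaptation rules stay well defined, i.e. that denominators such as $\xi_{1}-\underline{c}_{1i}$ and $\mathbf{F}(\widetilde{\underline{\mathbf{W}}}-\widetilde{\overline{\mathbf{W}}})^{T}$ remain bounded away from zero on the operating region — this is the "compact space" qualification noted in Section \ref{sec_SLDO}. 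Granting these, the bound $\alpha>\ddot{d}^{*}$ is precisely what the reachability step needs, and the remainder is the standard two-step sliding-mode stability argument.
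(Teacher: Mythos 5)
Your proof follows the same core route as the paper's: the Lyapunov candidate $V=\tfrac{1}{2}s^{2}$, the substitution of \eqref{eq_obs_error_self_3} into $\dot{s}$, the cancellation of the $l_{1}\dot{e}_{d}$ terms, and the resulting bound $\dot{V}\le -(2\alpha-\ddot{d}^{*})\,|s|$ under $\alpha>\ddot{d}^{*}$. Where you go beyond the paper is the final step: the paper stops at $\dot{V}<0$ and simply asserts that the closed-loop error dynamics converge to zero, whereas you correctly note that this Lyapunov function only establishes (finite-time) convergence of $s$, and you supply the missing sliding-phase argument --- on $s=0$ the reduced dynamics $(1+\eta l_{1})\dot{e}_{d}+l_{1}e_{d}=0$ are exponentially stable since $l_{1},\eta>0$, hence $e_{d}\to 0$. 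This reaching-plus-sliding decomposition is the standard and more rigorous way to reach the stated conclusion, and the caveats you flag (the validity of $\dot{\tau}_{n}=-2\alpha\,\mathrm{sgn}(s)$ from the Appendix and the well-definedness of the adaptation-law denominators) are exactly the implicit assumptions the paper's own proof also relies on.
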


\begin{proof}
The Lyapunov function is written as follows:
\begin{equation}\label{eq_observer_Lyapunov}
V= \frac{1}{2}  s^{2} 
\end{equation}
By taking the time-derivative of the Lyapunov function above, it is obtained as
\begin{eqnarray}\label{eq_observer_Lyapunov_d} 
\dot{V} =  s \dot{s} 
\end{eqnarray}
If the time-derivative of the sliding surface in \eqref{eq_slidingfsurface_2} is inserted into the aforementioned equation, it is obtained as follows:  
\begin{eqnarray}\label{eq_observer_Lyapunov_dd} 
 \dot{V} =  s \Big(\dddot{\hat{d}}_{BN} (\eta + \frac{1}{l_{1}}) + \ddot{\hat{d}}_{BN}  \Big)
\end{eqnarray}
The aforementioned equation is obtained considering \eqref{eq_do_estimationsignal2}
\begin{eqnarray}\label{eq_observer_Lyapunov_d1} 
 \dot{V} =  s \Big(  \ddot{e}_{d} (\eta l_{1} + 1 ) + l_{1} \dot{e}_{d}\Big)
\end{eqnarray}
\eqref{eq_obs_error_self_3} is inserted into \eqref{eq_observer_Lyapunov_d1}, it is obtained as follows:
\begin{eqnarray}\label{eq_observer_Lyapunov_d2}
 \dot{V} =  s \Big(  (\frac{  -l_{1} \dot{e}_{d}  - 2 \alpha sgn(s) + \ddot{d}} {1 + \eta l_{1}}) (\eta l_{1} + 1 ) + l_{1} \dot{e}_{d}\Big)
\end{eqnarray}
If it is assumed that $\ddot{d}$ is upper bounded by $\ddot{d}^{*}$:
\begin{eqnarray}\label{eq_observer_Lyapunov_d3}
 \dot{V}  <  \mid s \mid  (- 2 \alpha + \ddot{d}^{*})
\end{eqnarray}

As stated in Theorem \ref{theorem2}, if the learning rate $\alpha$ is large enough, i.e., $\alpha >\ddot{d}^{*}$, then the time-derivative of the Lyapunov function is negative, i.e., $\dot{V}<0$, so that the SLDO is globally asymptotically stable. This purports that the closed-loop error dynamics for the SLDO converge to zero.
\end{proof}

\subsection{Controller Design}

A novel control law based on the SLDO by taking the estimated values of the disturbance and disturbance rate into account are given as follows:
\begin{eqnarray}\label{eq_flcsldo_controllaw}
u = -b^{-1} (\textbf{x})\Big( a(\textbf{x}) +  k_{1} x_{1} + k_{2} (x_{2} + \hat{d}_{SL}) + \dot{\hat{d}}_{SL} \Big)
\end{eqnarray}
If the aforementioned control law in \eqref{eq_flcsldo_controllaw} is applied to the second-order nonlinear system in \eqref{eq_nonlinearsystem}, the closed-loop dynamics are obtained as
\begin{equation}\label{eq_flcsldo_error}
\ddot{x}_{1} + k_{2} \dot{x}_{1} + k_{1} x_{1} = k_{2} e_{d} + \dot{e}_{d}
\end{equation}
where $e_{d}$ and $\dot{e}_{d}$ denote the estimation errors for the disturbance and disturbance rate, respetively.

As stated in Theorem \ref{theorem2}, $e_{d}$ can converge to zero and the closed-loop error dynamics \eqref{eq_flcsldo_error} is stable if the coefficients are positive, i.e., $k_{1}, k_{2}>0$. Therefore, since the input satisfies $\displaystyle \lim_{ t \to \infty} e_{d}(t) = 0$ and the state satisfies $\displaystyle \lim_{ t \to \infty} x_{1}(t) = 0$ as stated in Lemma \ref{lemma_do_2}. 
This implies that the states can be driven to the desired equilibrium point asymptotically under the control law in \eqref{eq_flcsldo_controllaw} based on the SLDO.

\begin{remark} \label{remark_flcsldo}
The FLC based on the SLDO (FLC-SLDO) is robust to mismatched time-varying uncertainties and maintains the nominal control performance in case of no disturbance.
\end{remark}


\section{SIMULATION STUDIES}\label{sec_simulation}

The controllers designed in this paper are applied to a second-order nonlinear system which is formulated as follows
\begin{eqnarray}\label{eq_chaotic}
\dot{x}_{1} & = & x_{2} + d \nonumber \\
\dot{x}_{2} & = &  -x_{1} - x_{2} + 0.3 \cos{( x_{1} )} + e^{x_1} + u 
\end{eqnarray}
where $a(x)=  -x_{1} - x_{2} + 0.3 \cos{( x_{1} )} + e^{x_1}$,  $b(x)=1$ and $z=[1,0]^{T}$ as can be seen from \eqref{eq_nonlinearsystem}.

SMC theory suffers from high-frequency oscillations called \emph{chattering} because of the discontinuous term. Several approaches have been suggested to avoid chattering. One of the popular ways is to replace the sign function by an approximate $sgn(s)= s / (|s| + \delta)$, which mimics a high gain control \cite{7902222}. In this paper, the $sgn$ functions are replaced by the following equation to decrease the chattering effect:
\begin{equation}\label{sgn}
\textrm{sgn}\left( s\right) := \frac{s}{\mid s \mid + 0.05}
\end{equation}

The initial conditions on the states are set to $\textbf{x}(0)=[1,1]^{T}$. The coefficients of the FLCs are respectively set to $k_{1}=3$, $k_{2}=5$ and $k_{i}=3$. The observer gain of the DO is set to $\textbf{l}=[l_{1}, l_{2}]=[5, 0]$, the coefficient $\eta$ in \eqref{eq_sldo_tauc} is set to $\eta=10$ and the learning rate $\alpha$ of the SLDO is set to $0.03$. The initial condition on the parameter $q$ is set to $0.5$. In order to evaluate different controllers in the absence and presence of the mismatched uncertainties, there exists no disturbance imposed on the system at $t=0-20$ second, a step external disturbance $d = 0.5$ is imposed on the system at $t =20-40$ second and a multi-sinusoidal external disturbance $d=0.25+0.15 \times (\sin{(0.5t)} + \sin{(1.5t)} )$ is imposed on the system at $t=40-60$ second. The sampling time for all the simulations in this study is set to $0.001$ second while the number of membership functions are selected as $I = J = 3$. 

The responses of the states $x_{1}, x_{2}$ are respectively shown in Figs. \ref{fig_x1}-\ref{fig_x2}. In the absence of mismatched uncertainties, all FLCs ensure the nominal control performance while FLC-I cannot give the nominal control performance due to the integral action as stated in Remark \ref{remark_flci}. The integral action causes large overshoot, rise time and settling time. In the presence of mismatched time-invariant uncertainties, the FLC-I, FLC-BNDO, and FLC-SLDO ensure the robust control performance while the traditional FLC is sensitive to mismatched uncertainties as stated in Remark \ref{remark_tflc}. Moreover, the FLC-SLDO gives less rise time, settling time and overshoot than the FLC-I and FLC-BNDO. After $t=40$ second, a time-varying disturbance is imposed on the system to evaluate controllers against mismatched time-varying uncertainties. As seen, only the FLC-SLDO gives robust control performance while other controllers fail to drive the states to the desired equilibrium point.

\begin{figure*}[t!]
\centering
\subfigure[ ]{
\includegraphics[width=\columnwidth]{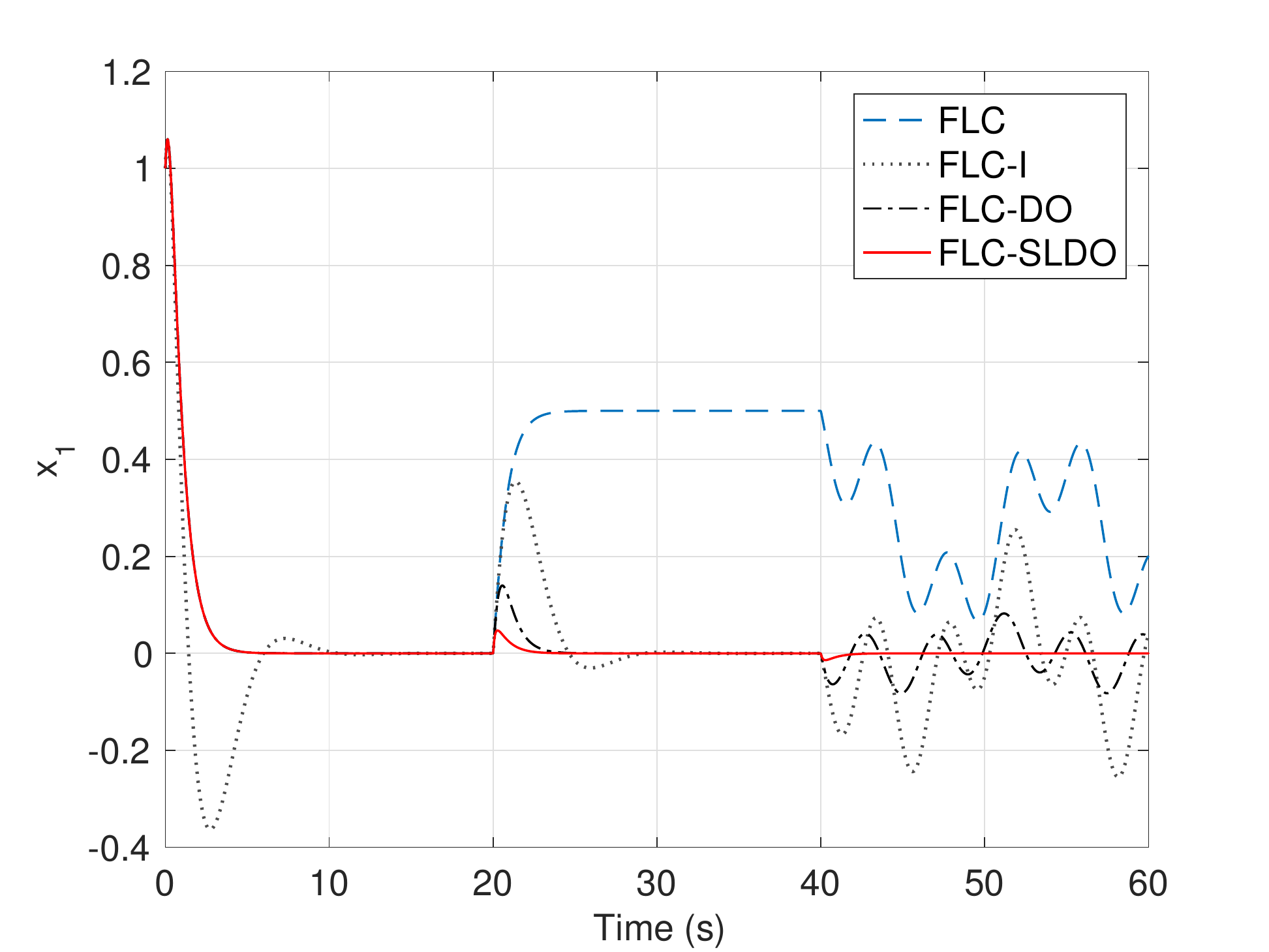}
\label{fig_x1}
}
\subfigure[ ]{
\includegraphics[width=\columnwidth]{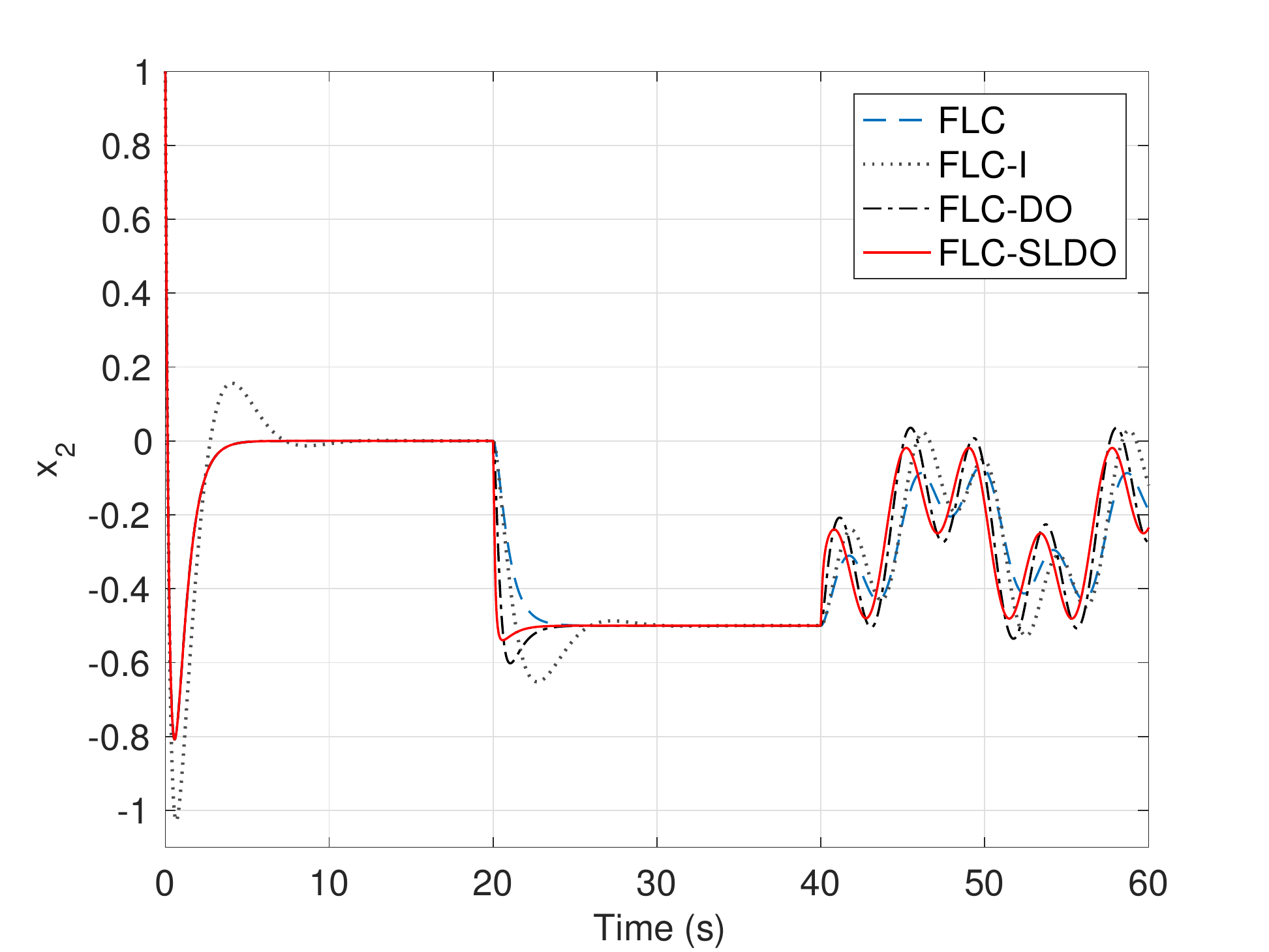}
\label{fig_x2}
}
\subfigure[ ]{
\includegraphics[width=\columnwidth]{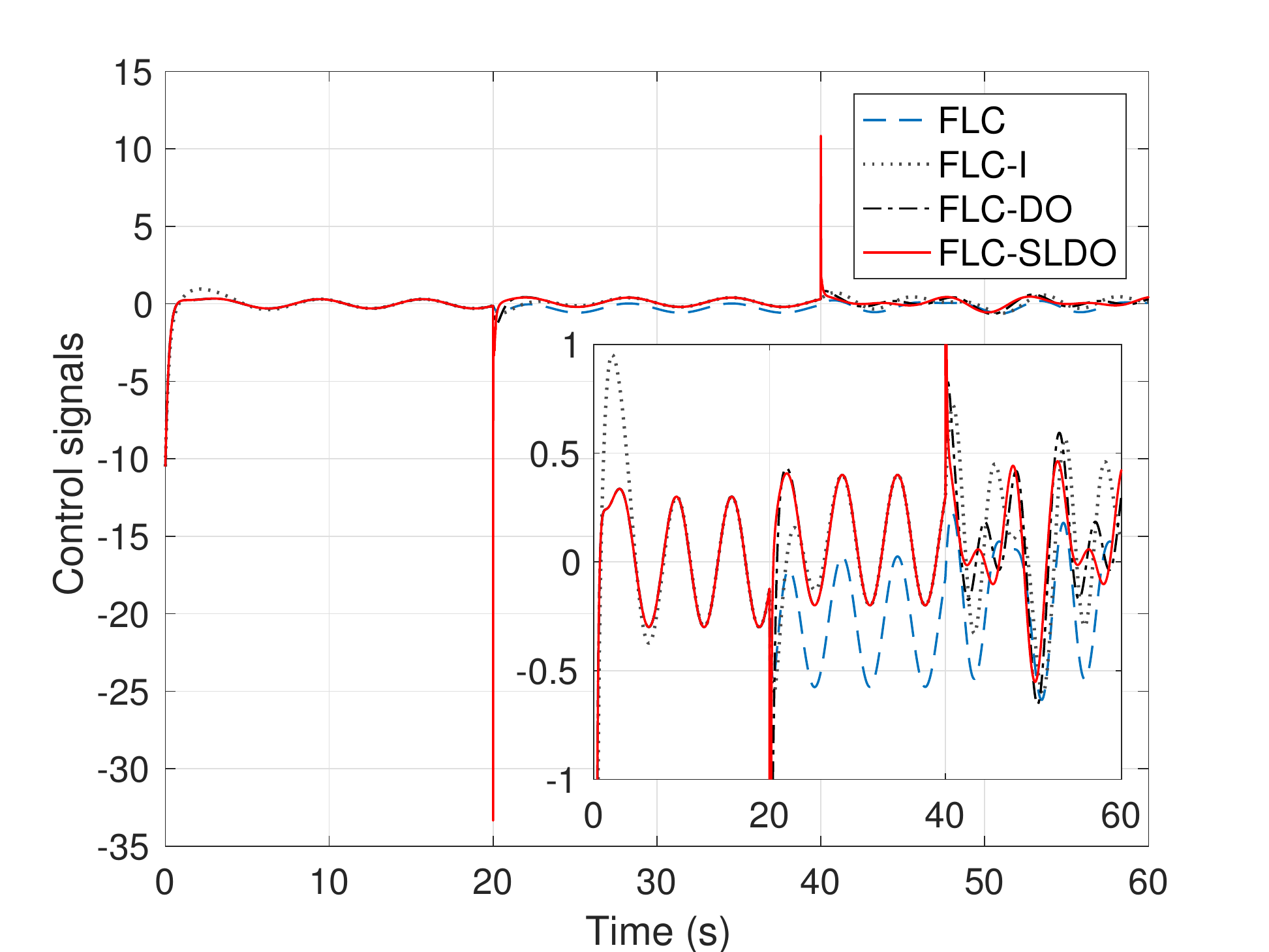}
\label{fig_control_signals}
}
\subfigure[ ]{
\includegraphics[width=\columnwidth]{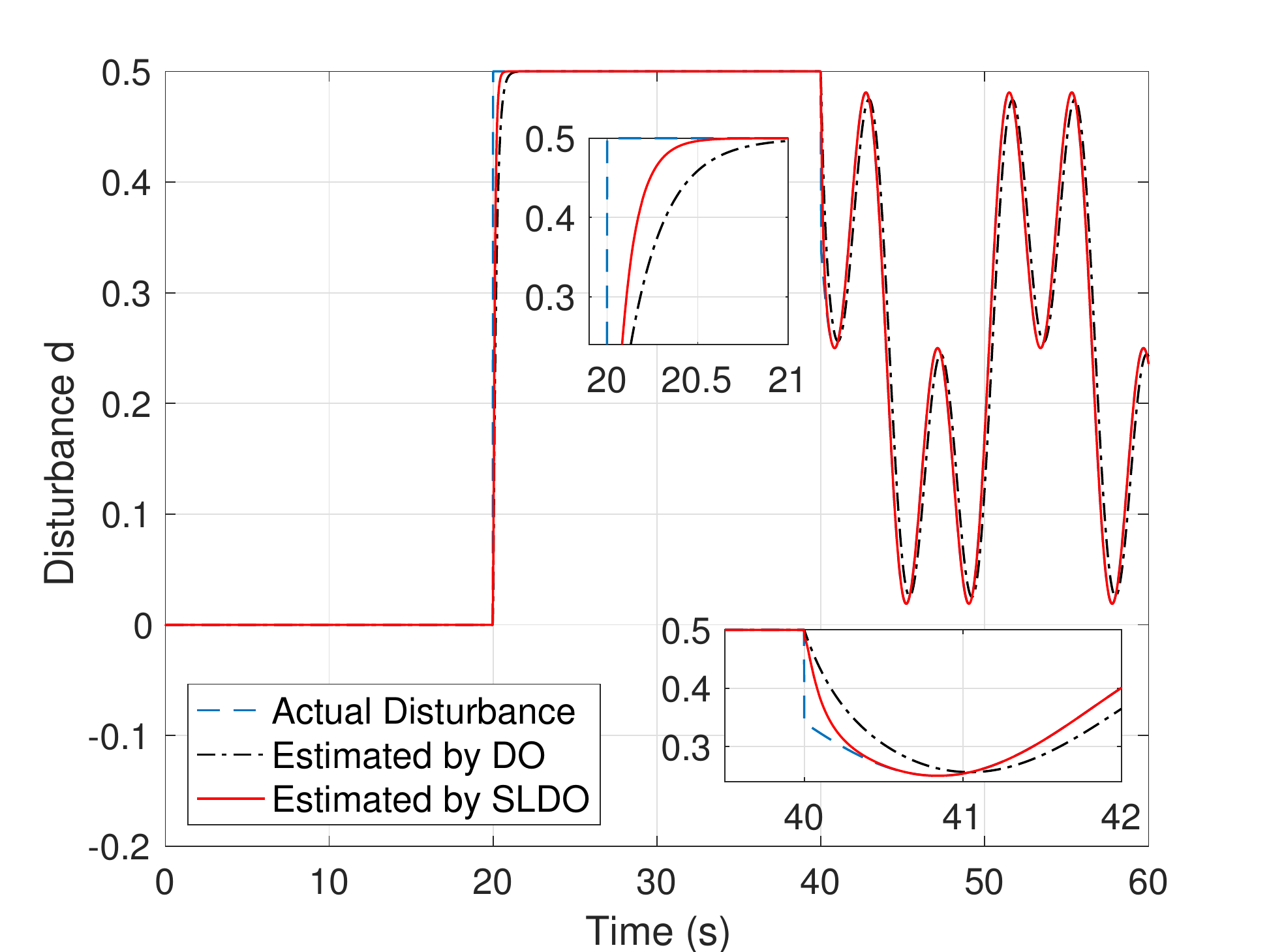}
\label{fig_d}
}
\subfigure[ ]{
\includegraphics[width=\columnwidth]{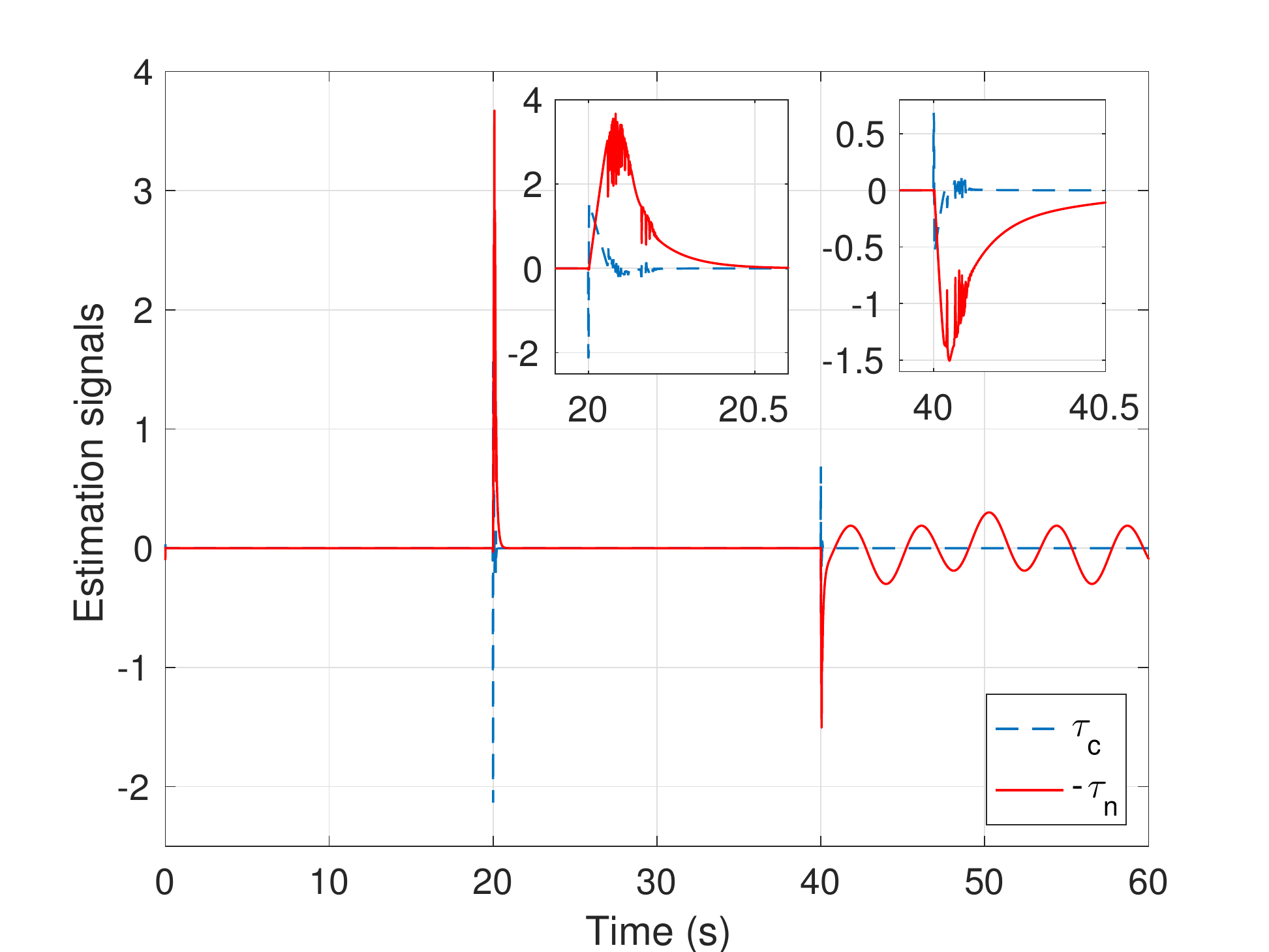}
\label{fig_estimation_signals}
}
\subfigure[ ]{
\includegraphics[width=\columnwidth]{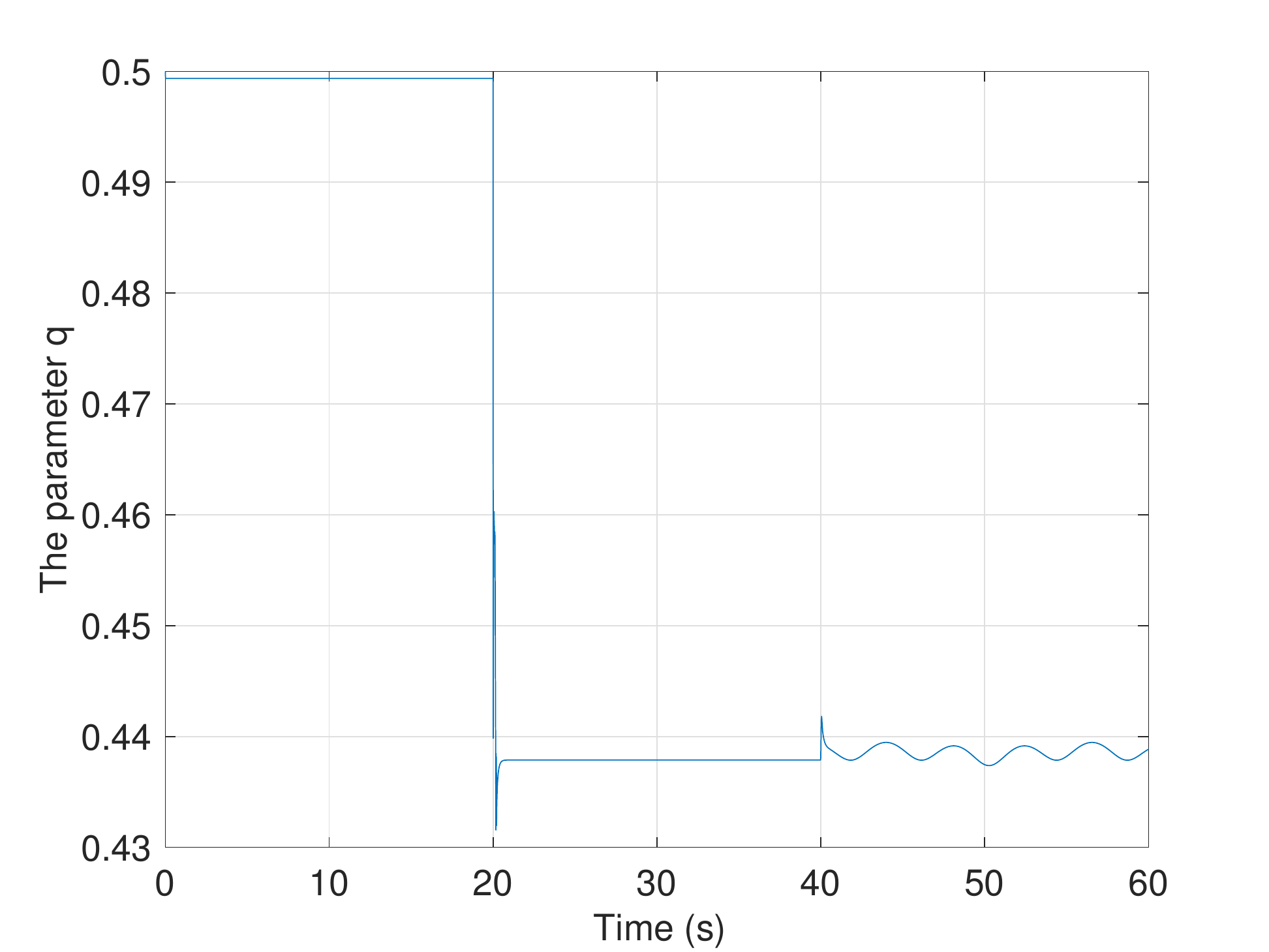}
\label{fig_q}
}
\caption[Optional caption for list of figures]{(a) The responses of state $x_{1}$ (b) The responses of state $x_{2}$ (c) The control signals (d) The true and estimated values of the distubance (e) The estimation signals (f) The parameter q }
\label{sensors}
\end{figure*}

The control signals are shown in Fig. \ref{fig_control_signals}, and the actual and estimated disturbances are shown in Fig. \ref{fig_d}. The BNDO can estimate only time-invariant disturbances while the SLDO can estimate both time-invariant and time-varying disturbances. Moreover, the SLDO reaches the true disturbance value faster than the BNDO in the presence of time-invariant disturbances. Thanks to learning by the T2NFS, the T2NFS can take the overall control of the estimation signal while the conventional estimation signal $\tau_{c}$ converges to zero in finite time as shown in Fig \ref{fig_estimation_signals}. Thus, the T2NFS becomes the leading estimation signal after a short time-period in the SLDO. The adaptation of the parameter $q$ is shown in Fig. \ref{fig_q}. Thanks to the adaptation rule in \eqref{eq_q}, it is adjusted throughout the simulations. Moreover, the phase portraits for all controllers are shown in Figs. \ref{fig_x1dotx1_1}-\ref{fig_x1dotx1_3}. The states driven to the desired equilibrium point can be seen.

%
%
%
%
%
%

\begin{figure*}[t!]
\centering
\subfigure[ ]{
\includegraphics[width=0.66\columnwidth]{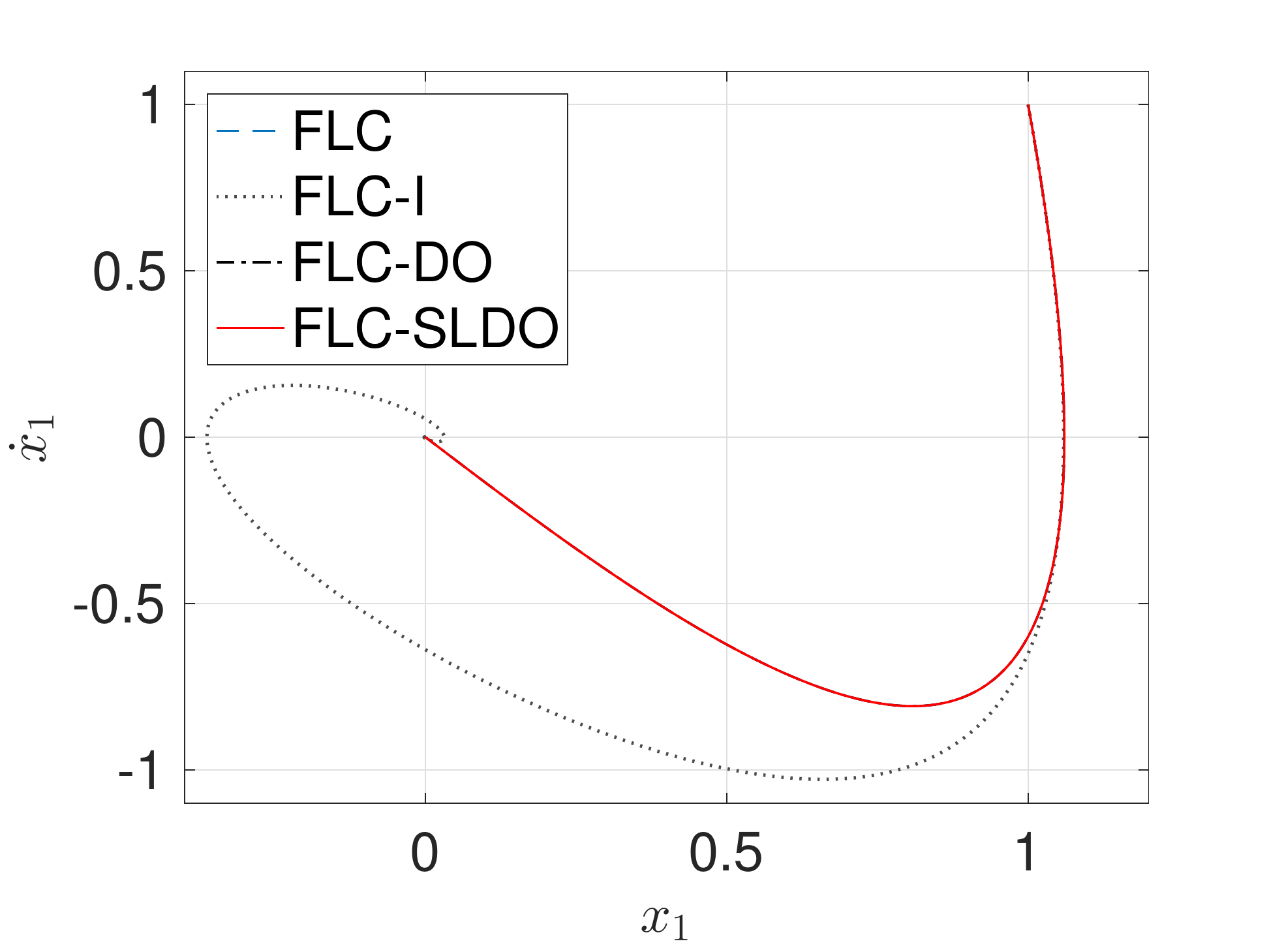}
\label{fig_x1dotx1_1}
}
\subfigure[ ]{
\includegraphics[width=0.66\columnwidth]{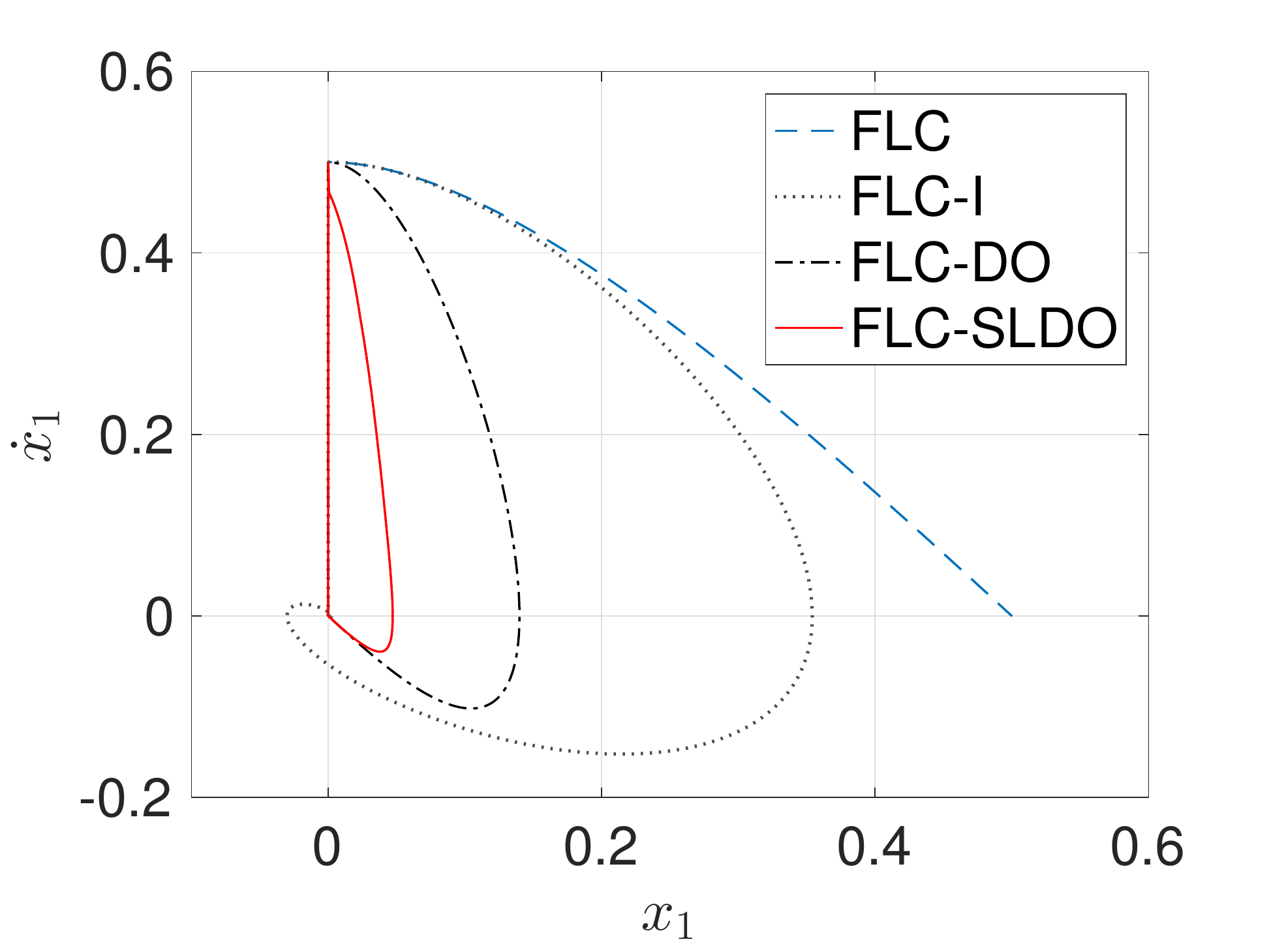}
\label{fig_x1dotx1_2}
}
\subfigure[ ]{
\includegraphics[width=0.66\columnwidth]{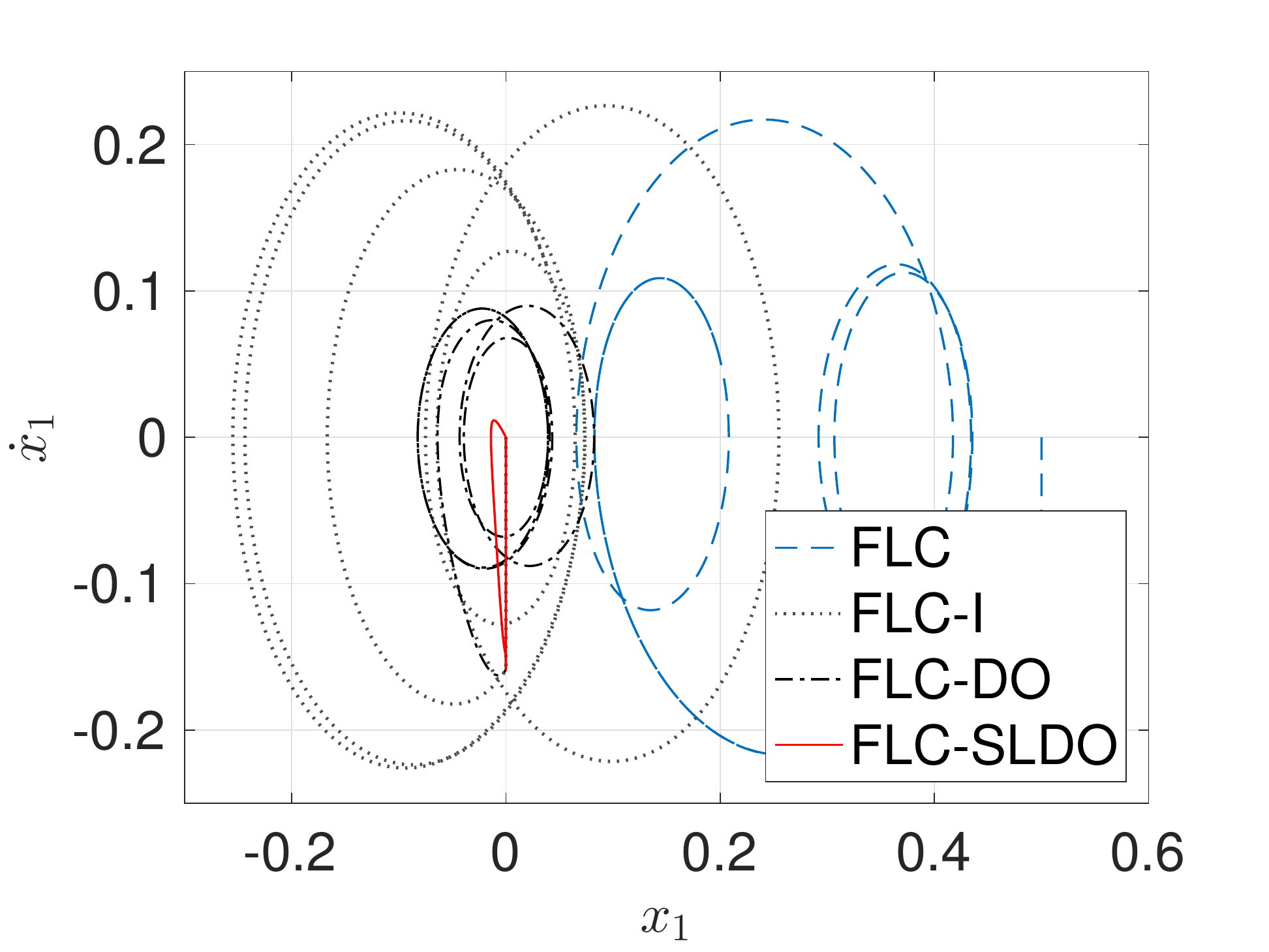}
\label{fig_x1dotx1_3}
}
\caption[Optional caption for list of figures]{ The phase portraits (a) $t=0-20$ second (b) $t=20-40$ second (c) $t=40-60$ second }
\label{fig_x1dotx1}
\end{figure*}

Type-2 fuzzy membership functions are used in the proposed SLDO structure. It is feasible to downgrade them to type-1 counterparts by equalizing the upper and lower values of parameters in \eqref{eq_c_1i}-\eqref{eq_sigma_2j_upper}. It is reported that since type-2 fuzzy logic systems have more degrees of freedom than type-1 counterparts, they have  the capability of dealing with noisy measurements and uncertainties more adequately \cite{Mendel2000913, Hsiao20081696, 5772027}. For this purpose, in order to compare the performance of T2NFS with its type-1 counterpart under noisy conditions, the outputs of the systems are measured with different noise levels $SNR$. The mean squared errors of disturbance estimation for the different noise levels are given in Table \ref{tab_mse}. As can be seen from this table, the T2NFS gives a smaller error than that of the type-1 neuro-fuzzy structure, and the performance of T2NFS is more remarkable while the noise level is high.
\begin{table}[h!]
\centering
\caption{Mean Squared Error.}\label{tab_mse}
\begin{tabular}{lccc}
  \hline
   & 20 (dB)& 40 (dB)  & 80 (dB)  \\
    \hline
      \hline
  Type-1 NFS & 0.6084 & 0.0133 & 0.0016 \\
  Type-2 NFS & 0.5542 & 0.0129 & 0.0016  \\
  \hline
    \hline
  Performance & \% 8.90  & \%  3.76 & \% 0 \\
      \hline
\end{tabular}
\end{table}

\section{CONCLUSION and FUTURE WORK}\label{sec_conc}

\subsection{Conclusion}
In this paper, the FLC-SLDO has been investigated to eliminate mismatched uncertainties on the system and to obtain robust control performance. The simulation results show that FLC-BNDO gives robust control performance against only mismatched time-invariant uncertainties. The stability of the training algorithm of the SLDO and the stability of the FLC-SLDO have been proven, and the simulation results show that FLC-SLDO gives robust control performance against mismatched time-varying uncertainties. Additionally, the FLC-SLDO gives less rise time, settling time and overshoot than the FLC-BNDO in the presence of mismatched time-invariant uncertainties, and they maintain the nominal control performance in the absence of mismatched uncertainties. Thanks to the T2NFS, the T2NFS working in parallel with the conventional estimation law in FEL scheme can estimate time-varying disturbances as distinct from the BNDO, and cope with uncertainties better than its type-1 counterpart under noisy conditions. The SMC-theory based learning algorithm requires significantly less computation time than the traditional ones, e.g., gradient descent and evolutionary training algorithms; therefore, the SLDO is computationally an efficient disturbance observer.

\subsection{Future Work}
The proposed method with its stability analysis is valid for
only a second-order nonlinear system with mismatched uncertainties.
As a future work, the extension of the SLDO and FLC law for nth order nonlinear systems with mismatched uncertainties are some interesting
topics to be investigated.

\section*{APPENDIX}
\subsection*{Calculation of $\dot{\tau}_{n}$}

By taking the time-derivative of the lower and upper Gaussian membership functions in \eqref{eq_mu1_lower}-\eqref{eq_mu2_upper}, the following equations are obtained as:
\begin{equation}
\dot{\underline{\mu}}_{1i}(x_1) = -2 \underline{N}_{1i} \dot{\underline{N}}_{1i} \underline{\mu}_{1i}(\xi_1),
\end{equation}
\begin{equation}
\dot{\overline{\mu}}_{1i}(x_1) = -2 \overline{N}_{1i} \dot{\overline{N}}_{1i} \overline{\mu}_{1i}(\xi_1)
\end{equation}
\begin{equation}
\dot{\underline{\mu}}_{2j}(x_2) = -2 \underline{N}_{2j} \dot{\underline{N}}_{2j} \underline{\mu}_{2j}(\xi_2)
\end{equation}
\begin{equation}
\dot{\overline{\mu}}_{2j}(x_2) = -2 \overline{N}_{2j} \dot{\overline{N}}_{2j} \overline{\mu}_{2j}(\xi_2)
\end{equation}
where
\begin{eqnarray}
&& \underline{N}_{1i}=\Big(\frac{\xi_1-\underline{c}_{1i}}{\underline{\sigma}_{1i}}\Big), \quad  \overline{N}_{1i}=\Big(\frac{\xi_1-\overline{c}_{1i}}{\overline{\sigma}_{1i}}\Big),  \\
 && \underline{N}_{2j}=\Big(\frac{\xi_2-\underline{c}_{2j}}{\underline{\sigma}_{2j}}\Big), \quad \overline{N}_{2j}=\Big(\frac{\xi_2-\overline{c}_{2j}}{\overline{\sigma}_{2j}}\Big)  \\ 
 && \dot{\underline{N}}_{1i}= \frac{(\dot{\xi_1} - \dot{\underline{c}}_{{1i}})\underline{\sigma}_{1i}-(\xi_1 - \underline{c}_{1i})\dot{\underline{\sigma}}_{1i}}{\underline{\sigma}^2 _{1i}}  \\
&& \dot{\overline{N}}_{1i}= \frac{(\dot{\xi_1} - \dot{\overline{c}}_{{1i}})\overline{\sigma}_{1i}-(\xi_1 - \overline{c}_{1i})\dot{\overline{\sigma}}_{1i}}{\overline{\sigma}^2 _{1i}} \\
&& \dot{\underline{N}}_{2i}= \frac{(\dot{\xi_2} - \dot{\underline{c}}_{{2i}})\underline{\sigma}_{2i}-(\xi_2 - \underline{c}_{2i})\dot{\underline{\sigma}}_{2i}}{\underline{\sigma}^2 _{2i}} \\
&& \dot{\overline{N}}_{2i}= \frac{(\dot{\xi_2} - \dot{\overline{c}}_{{2i}})\overline{\sigma}_{2i}-(\xi_2 - \overline{c}_{2i})\dot{\overline{\sigma}}_{2i}}{\overline{\sigma}^2 _{2i}}
\end{eqnarray}

By taking the time-derivative of the normalized firing strengths of the lower output signal in \eqref{eq_wij_lower_upper_normalized}, the following equation is obtained:

\begin{eqnarray}\label{dotwij_normalized}
\dot{\widetilde{\underline{w}}}_{ij} = -\widetilde{\underline{w}}_{ij}\underline{K}_{ij}+ \widetilde{\underline{w}}_{ij} \sum_{i=1}^{I}\sum_{j=1}^{J}\widetilde{\underline{w}}_{ij} \underline{K}_{ij} \nonumber \\
\dot{\widetilde{\overline{w}}}_{ij} = -\widetilde{\overline{w}}_{ij}\overline{K}_{ij} + \widetilde{\overline{w}}_{ij} \sum_{i=1}^{I}\sum_{j=1}^{J}\widetilde{\overline{w}}_{ij} \overline{K}_{ij}
\end{eqnarray}
where
\begin{eqnarray*}
\underline{K}_{ij} = 2 \Big(\underline{N}_{1i} \dot{\underline{N}}_{1i} + \underline{N}_{2j} \dot{\underline{N}}_{2j}\Big) = 4 \alpha sgn{(s)} \nonumber \\
\overline{K}_{ij} = 2 \Big(\overline{N}_{1i} \dot{\overline{N}}_{1i} +\overline{N}_{2j} \dot{\overline{N}}_{2j} \Big) = 4 \alpha sgn{(s)} 
\end{eqnarray*}
%

The time-derivative of \eqref{eq_taun} is obtained to find $\dot{\tau}_{n}$ as follows:
\begin{eqnarray}
\dot{\tau}_n & = & q\sum_{i=1}^{I}\sum_{j=1}^{J}(\dot{f}_{ij} \widetilde{\underline{w}}_{ij}+ f_{ij}\dot{\widetilde{\underline{w}}}_{ij}) \nonumber \\ 
&&
+(1-q)\sum_{i=1}^{I}\sum_{j=1}^{J}(\dot{f}_{ij}\widetilde{\overline{w}}_{ij} + f_{ij}\dot{\widetilde{\overline{w}}}_{ij}) \nonumber \\
& & +\dot{q} \sum_{i=1}^{I}\sum_{j=1}^{J}f_{ij}\widetilde{\underline{w}}_{ij}
-\dot{q} \sum_{i=1}^{I}\sum_{j=1}^{J}f_{ij}\widetilde{\overline{w}}_{ij}
\end{eqnarray}

If \eqref{dotwij_normalized} is inserted to the aforementioned equation:
\begin{eqnarray}\label{dotVc2}
\dot{\tau}_n & = & q\sum_{i=1}^{I}\sum_{j=1}^{J}\bigg(\Big(-\widetilde{\underline{w}}_{ij}\underline{K}_{ij}+\widetilde{\underline{w}}_{ij} \sum_{i=1}^{I}\sum_{j=1}^{J}\widetilde{\underline{w}}_{ij} \underline{K}_{ij} \Big)f_{ij}  \nonumber \\ 
&&+\widetilde{\underline{w}}_{ij}\dot{f}_{ij}\bigg) +(1-q)\sum_{i=1}^{I}\sum_{j=1}^{J}\bigg(\Big( -\widetilde{\overline{w}}_{ij} \overline{K}_{ij}  \nonumber \\ 
&& + \widetilde{\overline{w}}_{ij} \sum_{i=1}^{I}\sum_{j=1}^{J}\widetilde{\overline{w}}_{ij}  \overline{K}_{ij} \Big)f_{ij} +\widetilde{\overline{w}}_{ij}\dot{f}_{ij}\bigg)  \nonumber \\ 
&&+\dot{q} \sum_{i=1}^{I}\sum_{j=1}^{J}f_{ij} (\widetilde{\underline{w}}_{ij} - \widetilde{\overline{w}}_{ij} )   \nonumber \\
& = & q \sum_{i=1}^{I}\sum_{j=1}^{J} 4 \alpha sgn{(s)}  \bigg(\Big(-\widetilde{\underline{w}}_{ij} + \widetilde{\underline{w}}_{ij} \sum_{i=1}^{I}\sum_{j=1}^{J}\widetilde{\underline{w}}_{ij} \Big)f_{ij} \nonumber \\
&& +\widetilde{\underline{w}}_{ij}\dot{f}_{ij}\bigg) +(1-q)\sum_{i=1}^{I}\sum_{j=1}^{J} 4 \alpha sgn{(s)} \bigg(\Big( -\widetilde{\overline{w}}_{ij} \nonumber \\
&& + \widetilde{\overline{w}}_{ij} \sum_{i=1}^{I}\sum_{j=1}^{J}\widetilde{\overline{w}}_{ij}  \Big)f_{ij} +\widetilde{\overline{w}}_{ij}\dot{f}_{ij}\bigg) \nonumber \\
&& +\dot{q} \sum_{i=1}^{I}\sum_{j=1}^{J}f_{ij} (\widetilde{\underline{w}}_{ij} - \widetilde{\overline{w}}_{ij} ) 
\end{eqnarray}

Since $\sum_{i=1}^{I}\sum_{j=1}^{J}\widetilde{\overline{w}}_{ij} =1$ and $\sum_{i=1}^{I}\sum_{j=1}^{J}\widetilde{\underline{w}}_{ij} =1$, the aforementioned equation becomes by using \eqref{eq_f_ij} and \eqref{eq_q} as follows:
\begin{eqnarray}\label{dotVc4}
\dot{\tau}_n & = & \sum_{i=1}^{I}\sum_{j=1}^{J}\Big(q\widetilde{\underline{w}}_{ij} \dot{f}_{ij} + (1-q)\widetilde{\overline{w}}_{ij} \dot{f}_{ij} \Big) \nonumber \\
&& +\dot{q} \sum_{i=1}^{I}\sum_{j=1}^{J}f_{ij} (\widetilde{\underline{w}}_{ij} - \widetilde{\overline{w}}_{ij} )   \nonumber \\
& = & \bigg( q \sum_{i=1}^{I}\sum_{j=1}^{J}\widetilde{\underline{w}}_{ij}+ (1-q)\sum_{i=1}^{I}\sum_{j=1}^{J}\widetilde{\overline{w}}_{ij} \bigg) \dot{f}_{ij}  \nonumber \\
&& +\dot{q} \sum_{i=1}^{I}\sum_{j=1}^{J}f_{ij} (\widetilde{\underline{w}}_{ij} - \widetilde{\overline{w}}_{ij} )   \nonumber \\
& = &- 2 \alpha sgn{(s)}
\end{eqnarray}

\bibliography{reference}
\bibliographystyle{IEEEtran}

\end{document}